\let\cite\citet
\let\citet\citet
\let\citep\citep
\title{Adaptive MCMC for multiple changepoint analysis with applications to large datasets}
\author{Alan Benson\textsuperscript{1,2,$\star$} and Nial Friel\textsuperscript{1,2}}
\renewcommand\@date{{%
  \vspace{-\baselineskip}%
  \large\centering
  \smallskip
  \textsuperscript{1}School of Mathematics and Statistics, University College Dublin\par
  \vspace{0.5ex}
  \textsuperscript{2}Insight Centre for Data Analytics\par
  \vspace{0.5ex}
  \textsuperscript{$\star$}alan.benson@insight-centre.org

  \bigskip
  \bigskip

  \today
}}
\newcommand{\thetavec}{\bm{\theta}}
\newcommand{\zvec}{\bm{z}}
\newcommand{\deltavec}{\bm{\delta}}
\newcommand{\deltavecpr}{\bm{\delta}^{\prime}}
\newcommand{\zvecpr}{\bm{z}^{\prime}}
\newcommand{\datavec}{\bm{y}}
\newcommand{\norm}[1]{\left\lVert#1\right\rVert}
\g@addto@macro\normalsize{%
  \setlength\abovedisplayskip{13pt}
  \setlength\belowdisplayskip{13pt}
  \setlength\abovedisplayshortskip{13pt}
  \setlength\belowdisplayshortskip{13pt}
}
\newtheorem{theorem}{Theorem}[section]
\newtheorem*{remark}{Remarks}
\theoremstyle{definition}
\newtheorem{definition}{Definition}[section]
\begin{document}
\maketitle

\begin{abstract}
We consider the problem of Bayesian inference for changepoints where the number and position
of the changepoints are both unknown. In particular, we consider product partition models
where it is possible to integrate out model parameters for the regime between each
changepoint, leaving a posterior distribution over a latent vector indicating the
presence or not of a changepoint at each observation. The same problem setting has been considered by 
Fearnhead (2006) where one can use filtering recursions to make exact inference. 
However, the complexity of this filtering recursions algorithm is quadratic in the number of observations. 
Our approach relies on an adaptive Markov Chain Monte Carlo (MCMC) method for finite
discrete state spaces. We develop an adaptive algorithm which can learn from the past states of 
the Markov chain in order to build proposal distributions which can quickly discover where 
changepoint are likely to be located. We prove that our algorithm leaves the posterior 
distribution ergodic. Crucially, we demonstrate that our adaptive MCMC algorithm is viable for 
large datasets for which the filtering recursions approach is not. Moreover, we show that
inference is possible in a reasonable time thus making Bayesian changepoint detection computationally efficient.
\end{abstract}

\section{Introduction}
Changepoint problems arise in many practical instances in statistics, for example, signal processing, financial economics, process monitoring control and DNA sequence analysis. 
Here we consider chronologically ordered data over a period of time where it is suspected that there may have been some change(s) in the underlying generating process. 
For changepoints in parametric models, a parameter value (e.g.\ Gaussian mean or Gaussian precision) applicable to a certain time period may not extend well to another time period.
Some examples include the rate of occurrences of coal mining disasters during the 18th and 19th century \citep{raftery1986}, gene expression sequences \citep{hocking2014seganndb} and 
financial time series \citep{chen2011parametric}. In this paper it is shown that analysis of multiple changepoint problems is feasible for larger datasets in a Bayesian setting using adaptive MCMC.

Markov Chain Monte Carlo methods (MCMC) can be used to estimate changepoint locations conditional on a \emph{fixed} number of changepoints, \citet{stephens1994} presents an MCMC method for this problem.
When the number of changepoints is unknown, inference is more challenging. This is the problem which we address in this paper. 
A common approach for state-space dimension traversing is the reversible jump algorithm of \citet{green1995reversible} which performs trans-dimensional MCMC over a set of models, each incorporating a different number of changepoints. A drawback of this algorithm is that it can be difficult to design proposals so that the chain mixes well within and well between all available models. An alternative approach due to \citet{chib1998estimation} compares models with different numbers of changepoints using approximate Bayes Factors from the MCMC output in a post-processing step. The latter method requires MCMC model output for each number of changepoints under consideration.

\citet{raey} developed a clever forward-backward algorithm, filtering recursions, which allows one to sample exactly from the posterior distribution of changepoints. The filtering recursions share some similarity to 
product partition models \citep{barry1992}. The overwhelming advantage of this method is that once the filtering recursions have been calculated, it allows one to draw to be sampled from the posterior using 
Carpenter's algorithm \citep{carpenter1999improved} that exploits the exponentially distributed spacing of order statistics in a uniform distribution. 

However, a drawback of filtering recursions is that the algorithm requires a precomputation step to compute the recursions which has a time complexity that is quadratic in the number of observations and thus 
restricts the amount of data that can be used to perform efficient inference in a reasonable time. \citet{raey} offers an solution to this problem that lowers the precision of the recursions in order to make 
their calculation time approximately linear in the number of observations and the price to pay is it results in an approximate algorithm thereby. 

Adaptive Markov Chain Monte Carlo Methods (AMCMC) have recently emerged in an attempt to improve the efficiency of MCMC algorithms.
Typically adaptive MCMC uses \emph{on-the-fly} refinement of the proposal distribution, taking information from the past history of the MCMC chain to yield a better mixing algorithm.
The adaptive Metropolis algorithm of \citet{haario2001} was one of the earliest adaptive MCMC algorithms using a random walk Metropolis algorithm with an adapted covariance matrix. It is limited to continuous state spaces and to target distributions where a Gaussian proposal is suitable.

Adaptive MCMC methods on discrete state spaces have not yet been widely studied, yet these are very well suited to this methodology. This is because the design of adaptable proposals on discrete state spaces has the advantage that discrete state spaces carry the property of smallness, outlined in \citet{meyn2012markov}, so that simultaneous uniform ergodicity of the proposal kernels is guaranteed, provided that the state space is irreducible and the transition kernel is aperiodic. 
The second condition necessary for ergodicity of adaptive MCMC, diminishing adaptation, can be satisfied in many ways on a discrete state space leading to widely applicable methods in problems such as variable selection and Bayesian optimisation \citep{mahendran2012adaptive}.
\citet{griffin2014individual} presents an adaptive MCMC algorithm on a discrete state space to carry out variable selection in a model choice setting.

In recent years, the emergence of big data across a vast range of models in statistics and machine learning has lead to the need for methods that can  scale well to large datasets. We highlight how our adaptive changepoint approach scales well with an increasing number of observations and an increasing number of changepoints. The size of large datasets can present challenges for non-adaptive MCMC due to the presence of many local modes in the posterior distribution. We show empirically how are algorithm learns to move away from local modes which hinder MCMC.

The remainder of the paper is organised as follows. Section \ref{sec:mcmodels} describes multiple changepoint models in a Bayesian framework, Section \ref{sec:model} describes our adaptive changepoint sampler and introduces some advanced adaptation techniques which improve efficiency. We present a brief review in Section \ref{sec:recur} of filtering recursions \citep{raey}. Section \ref{sec:proof} provide a proof of our algorithm and results for three datasets are presented in Section \ref{sec:results} along with comparisons to filtering recursions.

All methods in this paper have been implemented in \texttt{C} using the Intel \texttt{C} compiler running on an Intel i7 3.40GhZ equipped machine with 16GB of RAM. Code is available on request from the authors.

\section{Multiple Changepoint models} \label{section:multiplechangepointmodel}
\label{sec:mcmodels}
Consider observed data $\bm{y} = (y_1, y_2, \dots, y_n)$, where observation $y_i$ is observed before observation $y_j$, for $i<j$. We model $\bm{y}$ such that each observation $y_i$ arises independently from a likelihood model depending on a parameter $\theta_i \in \Theta$ whose value may or may not change from one observation to the next. The points at which $\theta_i$ does change are called changepoints.

Consider the possibility of an unknown $k < n$ changepoints in $\bm{y}$ occurring at positions $\bm{\tau} = \{\tau_1, \tau_2, \dots, \tau_k \}$.
These changepoints partition $\bm{y}$ into $k+1$ contiguous non-overlapping segments 
\begin{equation}
\left\{
(y_1, y_{\tau_1}), (y_{\tau_1 + 1}, y_{\tau_2}), \dots, (y_{\tau_k + 1}, y_{n})
\right\}.
\end{equation}

This partitioning of $\bm{y}$ can be represented by a fixed length latent changepoint indicator vector $\zvec = \{z_1, z_2, \dots,z_{n-1}\}$ with $z_t^{} = 1$ for each $t \in \bm{\tau}$ and $z_t^{} = 0$ for each $t \notin \bm{\tau}$, with the number of changepoints satisfying $ k = \sum_{i=1}^{n-1} z_i$.
Within segment $j$, the likelihood has a constant parameter $\theta_j,\text{ } 1 \leq j \leq k+1$. The full likelihood across all segments can be expressed as a product of $k+1$ segment likelihoods
\begin{equation}
f(\bm{y} \vert \theta_1, \theta_2, \dots, \theta_{k+1}, \bm{z}) = \prod\limits_{j=1}^{k+1} \prod_{i=\tau_{j-1}+1}^{\tau_j} f(y_i \vert \theta_j)
\label{likelihood}
\end{equation}
where $\tau_0 = 0, \tau_{k+1} = n$ and where $f(y_i\vert\theta_j^{})$ denotes the likelihood of observation $y_i$ in a segment with parameter $\theta_j$.
In a Bayesian formulation the joint posterior distribution for the latent changepoint indicator vector $\zvec$ and segment parameters $\thetavec = \{\theta_1,\dots,\theta_{k+1}\}$ can be written as a product of the full segment likelihood \eqref{likelihood} and the priors for $\zvec$ and $\thetavec$,
\begin{equation}
\begin{split}
\pi(\zvec, \thetavec \vert \datavec)
&\propto
f(\datavec \vert \thetavec, \zvec) 
\pi(\thetavec \vert \zvec)
\pi(\zvec)
\\[1.5ex]
&= \left(\prod\limits_{j=1}^{k+1} \prod\limits_{i = \tau_{j-1} + 1}^{\tau_j} f(y_i \vert \theta_j)\right)
\left( \prod\limits_{j=1}^{k+1} \pi(\theta_j) \right)
\pi(\zvec). \label{eq:jointztheta}
\end{split}
\end{equation}
The dependence of $\thetavec$ on $\zvec$ is only through the prior multiplicity of $\thetavec$ ($k+1$) which sets the dimension of the prior term $\pi(\thetavec \vert \zvec)$.
This shares some similarity to the hierarchical changepoint model used in \citet{green1995reversible} except that it does not condition on the number of changepoints and so this varies over the support of $\zvec$.

The prior for $\bm{z}$ specifies how the changepoint positions should be distributed prior to the data being observed.
A convenient form that captures the gap lengths between changepoints is,
\begin{equation*}
\pi(\zvec) = \pi(\tau_1, \dots \tau_k) = 
g_0(\tau_1)\left( \prod_{j=2}^{k}{g(\tau_j - \tau_{j-1})}\right)\left(1 - G(n - \tau_k) \right),
\end{equation*}
where $g_0(\cdot)$ is the distribution of the distance to the first changepoint, $g(\cdot)$ is the gap distribution for the distance between successive changepoints and $G(\cdot)$ is the cumulative distribution function for $g(\cdot)$. The choice for $g$ can be a negative binomial or its special case, a geometric distribution
\begin{equation*}
g(t) = \binom{t-1}{k-1}p^k(1-p)^{t-k}, \hspace{2em} g_0(t) = p(1-p)^{t-1}.
\end{equation*}
A more complex prior that minimises the \textit{a priori} clustering of changepoints \citep{green1995reversible}, is specified by the distribution of even order statistics of a draw of size $2k+1$ from $(1, \dots, n-1)$ without replacement. This prior prevents changepoints occurring at adjacent observations which minimises outliers (degenerate changepoints) being classified as true changepoints.

The priors for $\theta_j$ can be chosen to be conjugate to the likelihood, however this is not a requirement. The next section details the collapsing of the joint posterior \eqref{eq:jointztheta} when the prior is conjugate but if it is possible to collapse the joint posterior using another method (e.g.\ quadrature) this is also feasible for use in our algorithm.


\subsection{Collapsing multiple changepoints models}
\label{sec:collapselike}
We assume that it is possible to integrate (collapse) out $\thetavec = \{ \theta_1, \theta_2, \dots, \theta_{k+1}\}$ parameters from the posterior \eqref{eq:jointztheta} to leave a discrete state space of changepoint positions. This is also the approach taken by \citet{raey}. With an appropriate conjugate prior for $\thetavec$, the resulting posterior for $\zvec$ is
\begin{equation}
\begin{split}
\pi(\zvec \vert \datavec) &\propto 
\int_{\thetavec}
f(\datavec \vert \thetavec, \zvec) 
\pi(\thetavec)
\pi(\zvec)
\,d \thetavec
\\[1.5ex]
&= \pi(\zvec) \prod\limits_{j=1}^{k+1}
\left(
\int_{\theta_j}  \prod\nolimits_{i = \tau_{j-1} + 1}^{\tau_j} f(y_i \vert \theta_j) \pi(\theta_j)
\,d \theta_j
\right) \\[1.5ex]
&=
\pi(\zvec) \prod\limits_{j=1}^{k+1} \mathrm{P}(\tau_{j-1} + 1, \tau_{j}),
\label{marglike}
\end{split}
\end{equation}
where $\mathrm{P}(\tau_{j-1} + 1, \tau_{j}) = \int_{\theta_j}  \prod\nolimits_{i = \tau_{j-1} + 1}^{\tau_j} f(y_i \vert \theta_j) \pi(\theta_j)
\,d \theta_j$ denotes the evidence for segment $(y_{\tau_{j-1}+1}, y_{\tau_j})$. The evidence (marginal likelihood) is the probability of the data observed in that segment after the dependence on the parameter $\theta_j$ has been integrated out with respect to its prior. The dependence of $\thetavec$ on $\zvec$ has been removed the position of changepoints and the within segment parameter are assumed independent.
\subsubsection{A simple example of Collapsing - Poisson Gamma}
Consider the case where the data in segment $j$ can be modelled by a Poisson distribution with parameter $\theta_j > 0$. Placing a Gamma($\alpha, \beta$) prior on each $\theta_j$ and integrating out $\theta_j$ for $j \in \{ 1, \dots, k+1 \}$, the marginal likelihood for segment $(y_a, y_b)$ is, 
\begin{equation}
\begin{split}
\mathrm{P}(a, b) &= \int_{0}^{\infty}
\frac{\beta^\alpha}{\Gamma(\alpha)} \theta_j^{\alpha - 1} e^{-\alpha \theta_j} \prod\limits_{i = a}^{b} \frac{\theta_j^{y_i}}{y_i!} e^{-\theta_j} \, d \theta_j \label{poismarg} \\[1.5ex]
&= \frac{\beta^\alpha}{\Gamma(\alpha)} \frac{1}{F_{a:b}} \frac{\Gamma(S_{a:b} + \alpha)}{\left(b - a + 1 + \beta\right)^{S_{a:b} + \alpha}},
\end{split}
\end{equation}
where
\begin{equation*}
\mathrm{F}_{a:b} = \prod\nolimits_{i=a}^{b} y_i! \hspace{2em} \text{and} \hspace{2em}
\mathrm{S}_{a:b} = \sum\nolimits_{i = a}^{b} y_i.
\end{equation*}
Precomputation of $\mathrm{F}_{1:t}^{}$ and $\mathrm{S}_{1:t}^{}$ for $1 \leq t \leq n$ and using the following recursions
\begin{equation*}
\mathrm{F}_{a:b} = \frac{\mathrm{F}_{1:b}}{\mathrm{F}_{1:{a-1}}} \hspace{2em} \text{and} \hspace{2em} \mathrm{S}_{a:b} = \mathrm{S}_{1:b} - \mathrm{S}_{1:{a-1}},
\end{equation*}
negates the need to store each individual $y_i$ for computation of the marginal likelihood \eqref{poismarg} which is required to be computed many times in filtering recursions and in our algorithm.
Similar precomputations are available for other likelihood models, see Appendix \ref{app:normallike} for the Gaussian distribution mean and precision.

\section{Adaptive MCMC changepoint sampler}
\label{sec:model}
We now introduce our adaptive MCMC changepoint sampler to sample from the posterior distribution of changepoints \eqref{marglike}. \citet{wyse2010simulation} developed an MCMC scheme based on adding, deleting 
and position adjusting changepoints using samplers similar to those used by \citet{lavielle2001application}. The algorithm of \citet{wyse2010simulation} turns out to be a special case of our adaptive algorithm 
when no adaptation occurs and as we shall see, the adaptive MCMC algorithm we develop offers an improvement in efficiency, by comparison.

Sampling over $\bm{z}$ is a challenging problem as the size of the space scales exponentially with $n$, leaving brute force enumeration of all $\bm{z}$ intractable. However, for datasets with few changepoints 
($k \ll n$) the realised $\bm{z}$ vectors will be quite sparse. The design of our algorithm motivates searching element-wise through $\bm{z}$ identifying which elements (positions) are likely changepoints and 
those which are not. Positions which are deemed unlikely to be changepoints will tend not to be proposed as changepoint locations and conversely locations which are identified as being locations of changepoints
will tended to be proposed more frequently. 
In this way, our algorithm will facilitate proposed moves to centre around areas of high changepoint activity and move away from areas of low changepoint activity. As we will shortly see, this adaptive algorithm
where proposed changepoint locations change over time will by design preserve the ergodicity of the adaptive Markov chain.

We now describe the adaptive algorithm in detail and defer a proof of ergodicity to Section \ref{sec:proof}.

\subsection{Detailed description}
\label{sec:amcmcdesc}
At iteration $t$ denote the current state of changepoint locations as $\bm{z}^{(t)}$.
Our algorithm consists of three proposal moves to update the vector $\bm{z}^{(t)}$. The three proposal moves involve adding a new changepoint to $\bm{z}^{(t)}$ (\emph{add move}), deleting a changepoint from $\bm{z}^{(t)}$ (\emph{delete move}) and moving an existing changepoint within $\bm{z}^{(t)}$ (\emph{adjust move}). At each iteration $t$, one of either the \emph{add move} or the \emph{delete move} is selected with probability $p$ and $1-p$, respectively. The \emph{adjust move} is performed after either an add or delete move and in our implementation of the adaptive MCMC algorithm is an optional move type. 

The space of all realisable $\bm{z}$ vectors is large, having $2^{n-1}$ elements. It is important therefore to add changepoints in locations of high posterior changepoint probability and delete changepoints 
in areas of low posterior changepoint probability. It turns out that adaptively learning these areas \emph{on-the-fly} provides a route to a scalable inferential framework for large datasets, as we now illustrate.

We associate with $\bm{z}^{(t)}$ two iteration dependent selection weight vectors $\bm{a}^{(t)} = \{a^{(t)}_1 \dots, a^{(t)}_{n-1}\}$ and $\bm{d}^{(t)} = \{d^{(t)}_1 \dots, d^{(t)}_{n-1}\}$. 
We remark that these weights are correspond to how often the algorithm should pick a particular point. This is different to the approach of \citet{griffin2014individual} where the vectors are used as inclusion probabilities for variable selection. If a changepoint is proposed to be added, a position $i$ (having $z^{(t)}_i = 0$) will be selected as the add position with probability $a^{(t)}_i / \sum_{\{j, z_j = 0\}} a^{(t)}_j$. If a changepoint is proposed to be deleted, some position $i$ (having $z^{(t)}_i=1$) will be selected as the deletion position with probability $d^{(t)}_i / \sum_{\{j, z_j = 1\}} d^{(t)}_j$. If the relevant add or delete move is accepted then the selected element $i$ of $\bm{z}^{(t+1)}$ will be toggled, otherwise $\bm{z}^{(t+1)}$ does not change from $\bm{z}^{(t)}$.

The probability of accepting or rejecting the moves described above will depend on the relative change in the marginal likelihood of the segment added or deleted around position $i$. Let $a$ be the changepoint immediately before $i$ and $b$ the changepoint immediately after $i$. The addition of a changepoint at position $i$ would cause the segment that contains position $i$ to be split into two new segments $(y_{a+1},y_{i})$ and $(y_{i+1},y_{b})$. The deletion of a changepoint at position $i$ would cause the two segments created by the changepoint at $i$ to merge into one segment $(y_{a+1}, y_b)$. All other segments remain the same. The marginal likelihood ratios are thus
\begin{equation}
\text{Add Move} \rightarrow \frac{\mathrm{P}(a+1, i)\mathrm{P}(i+1, b)}{\mathrm{P}(a+1,b)} \quad\quad \text{Delete Move} \rightarrow \frac{\mathrm{P}(a+1,b)}{\mathrm{P}(a+1, i)\mathrm{P}(i+1, b)}.
\end{equation}
The two moves are summarised clearly in Figure \ref{fig:moves}. The \emph{adjust move} simply selects uniformly some $z_i = 1$ and propose to move it locally somewhere between the changepoint before it and 
the changepoint after it. If there are no changepoints this move cannot be and is not attempted. 

\begin{figure}[ht]
\centering
\begin{minipage}[t]{0.485\textwidth}
  \vspace{0pt}  
  \renewcommand{\algorithmcfname}{Move}
  \begin{algorithm}[H]
	\SetAlgoHangIndent{0pt}
    \caption{Add a Changepoint}
    Calculate $\bm{a}_+^{(t)} = \sum\nolimits_{\{j, z_j = 0\}} \bm{a}_j^{(t)}$ and $\bm{d}_+^{(t)} = \sum\nolimits_{\{j, z_j = 1\}} \bm{d}_j^{(t)}$.\\
    Select $i$ with $z_i = 0$ with prob. $\bm{a}_i^{(t)}/\bm{a}_+^{(t)}$.\\
    Accept to toggle $z_i = 1-z_i$ with probability $\min(1, \alpha_{\text{add}})$, where
    $$\alpha_{\text{add}} =
    \tfrac{\pi(\bm{z}^{\prime})}{\pi(\bm{z}^{})}
	\tfrac{\mathrm{P}(a+1,i)\mathrm{P}(i+1,b)}{\mathrm{P}(a+1,b)}  
	\tfrac{1-p}{p} 
	\tfrac{\bm{d}_i^{(t)}/(\bm{d}_i^{(t)} + \bm{d}_+^{(t)})}{\bm{a}_i^{(t)}/\bm{a}_+^{(t)}}.
    $$
  \end{algorithm}
\end{minipage}\hspace{1em}
\begin{minipage}[t]{0.485\textwidth}
  \vspace{0pt}
  \renewcommand{\algorithmcfname}{Move}
  \begin{algorithm}[H]
  	\SetAlgoHangIndent{0pt}
    \caption{Delete a Changepoint}
    Calculate $\bm{d}_+^{(t)} = \sum\nolimits_{\{j, z_j = 1\}} \bm{d}_j$ and $\bm{a}_+^{(t)} = \sum\nolimits_{\{j, z_j = 0\}} \bm{a}_j^{(t)}$.\\
    Select $i$ with $z_i = 1$ with prob. $\bm{d}_i^{(t)}/\bm{d}_+^{(t)}$.\\
    Accept to toggle $z_i = 1-z_i$ with probability $\min(1, \alpha_{\text{del}})$, where
    $$\alpha_{\text{del}} = 
    \tfrac{\pi(\bm{z}^{\prime})}{\pi(\bm{z}^{})}
	\tfrac{\mathrm{P}(a+1,b)}  {\mathrm{P}(a+1,i)\mathrm{P}(i+1,b)}
	\tfrac{p}{1-p} 
	\tfrac{\bm{a}_i^{(t)}/(\bm{a}_i^{(t)} + \bm{a}_+^{(t)})}{\bm{d}_i^{(t)}/\bm{d}_+^{(t)}}.
    $$
  \end{algorithm}
\end{minipage}
\caption{Adaptive MCMC changepoint sampler moves, the add move is performed with probability $p$ and the delete move with probability $1-p$.}
\label{fig:moves}
\end{figure}

This is the basis of our changepoint sampler. We are now left to describe the adaptation scheme used to update the $\bm{a}^{(t)}$ and $\bm{d}^{(t)}$ vectors during the algorithm, using the past history of the add and selected moves. This is a crucial part of the algorithm as these parameters decide where to place changepoints and remove changepoints in an efficient manner. This is described in the following section.

\subsection{Adaptation of the selection weights \texorpdfstring{$\bm{a}^{(t)}$}{a} and \texorpdfstring{$\bm{d}^{(t)}$}{d}}
The MCMC algorithm of \citet{wyse2010simulation}
 selects positions $i$ for addition and deletion uniformly at random from all the valid $n-1$ positions. This is equivalent to having constant vectors $\bm{a}^{(t)}$ and $\bm{d}^{(t)}$ which do not vary with iteration $t$. The adaptive method we use, proposes to update $\bm{a}^{(t)}$ and $\bm{d}^{(t)}$ using information from previously accepted \emph{add} and \emph{delete} moves. The scheme for adaptation is given in Figure \ref{fig:adaptscheme}. The strategy is to target the acceptance rate of the \emph{add} and \emph{delete} moves to an overall target acceptance rate $\alpha_\text{target}$ by updating the $\bm{a}^{(t)}$ and $\bm{d}^{(t)}$ at each iteration. The updates are performed on the $\log$ scale to ensure that the weights remain positive. 
\begin{figure}[ht]
\fbox{\parbox{0.98\textwidth}{
\vspace{1ex}
\textbf{Adaptation Scheme}\\
At iteration $t$:
\begin{enumerate}
\item If an \emph{add} move at point $i$ has been accepted then update only the $a^{(t)}_i$ parameter as follows
\begin{equation*}
\log(a_{i}^{(t+1)}) = \log(a_{i}^{(t)}) + \frac{h}{t/n}\left( \alpha_{\text{add}}^{} - \alpha_\text{target} \right).
\end{equation*}
\item If a \emph{delete} move at point $i$ has been accepted then update only the $d^{(t)}_i$ parameter as follows
\begin{equation*}
\log(d_{i}^{(t+1)}) = \log(d_{i}^{(t)}) + \frac{h}{t/n}\left( \alpha_{\text{del}}^{} - \alpha_\text{target} \right).
\end{equation*}
\end{enumerate}
\hspace{0.5em}\textbf{Parameters}\par
\hspace{0.5em}$h$ - Initial Adaptation ($h > 0$)\par
\hspace{0.5em}$t/n$ - Monte Carlo time, iterations ($t$) per number of datapoints ($n$)\par\par
}}
\caption{Adaptation scheme to update the vectors $\bm{a}^{(t)}$ and $\bm{d}^{(t)}$. }
\label{fig:adaptscheme}
\end{figure}

This adaptation scheme is different from \citet{griffin2014individual} in that there is no restriction on $0 < \bm{a}_i < 1$ or $0 < \bm{d}_i < 1$ as these are unnormalised selection weights and not probabilities. The parameter $h$ controls the initial intensity of the adaptation, we find values $ << 1$ work well. 


\subsubsection{A note on Non Uniform Sampling for selection weights}
The $\bm{a}^{(t)}$ and $\bm{d}^{(t)}$ weights, once normalised appropriately using $\bm{a}_+^{(t)}$ and $\bm{d}_+^{(t)}$ (see Figure \ref{fig:moves}), must be sampled from to propose elements of $\bm{z}^{(t)}$ for toggling. Discrete random variate generation for non-uniform probability vectors presents an extra level of complexity. In the case of  \citet{wyse2010simulation} with no adaptation, selection of elements for toggling is $\mathcal{O}(1)$ and is extremely efficient. To take advantage of the adaptive proposals the algorithm requires an efficient non-uniform sampler.

A na\"{\i}ve implementation of non-uniform sampling from the $\bm{a}^{(t)}$ and $\bm{d}^{(t)}$ vectors involves building a cumulative distribution of the values, $\mathcal{O}(n)$ time, and then sampling from this by binary lookup, $\mathcal{O}(\log_2 n) \text{ time}$. This is significantly slower and may even detriment the use of the adaptive algorithm in the first instance.
A method due to \citet{walker1974new} overcomes this problem by precomputing lookup tables called alias tables in $\mathcal{O}(n)$ time and then sampling in $\mathcal{O}(1)$ time. A numerically stable implementation of Walker's method that overcomes numerical errors is due to \citet{vose1999simple}. A discussion of the alias method is given in Appendix \ref{app:alias}.

Using alias tables we can get quite close to uniform sampling efficiency. Note that \citet{matias1993dynamic} allows updating alias tables in less than $\mathcal{O}(n)$ time, however this imposes restrictions on the magnitude of the change in weights at each adaptation step.

\subsection{Advanced Adaptation Techniques}
\label{sec:adapt}
In this section some advanced techniques are presented to improve the efficiency of the adaptive method. 
It is possible to implement thresholding of the $\bm{a}^{(t)}$ values so that only some of the values use alias tables. Dual adaptation is used by \citet{griffin2014individual} to simultaneously update $\bm{\bm{a}}^{(t)}$ and $\bm{\bm{d}}^{(t)}$ after an accepted move. We modify this to our adaptation scheme. These advanced techniques allow the algorithm to be computationally efficient while performing the adaptive updates. Many issues with adaptive MCMC can arise due to adapting too quickly. These issues are discussed in \citet{latuszynski2014containment}.
\subsubsection{Advanced Adaptation 1: Thresholding of non-changepoints}
Many of the $\bm{a}_i$ values won't significantly change in magnitude over the course of the algorithm. This is due to the update of the $\bm{a}_i$ values only being performed on acceptance of a changepoint and for points far away from changepoints the $\bm{a}_i$ will rarely change. Computational time is still spent embedding these small $\bm{a}_i$ in the rebuilding of alias tables each time any  $\bm{a}_i$ changes. This problem isn't as pronounced for the $\bm{d}_i$ values as we assume that there are many more non-changepoints than changepoints in a dataset.

To take advantage of the low number of changepoints, we propose to split the points that are not changepoints into two groups, one with high posterior probability of being added, $G_\text{active}$, and the other with a low posterior probability of being added, $G_\text{inactive}$. The membership of each group is mutually exclusive and is determined by a threshold parameter $\bm{a}_{\text{cutoff}}$. All points begin in $G_\text{inactive}$ and as the $\bm{a}_i$ values are adapted, points with $\bm{a}_i > \bm{a}_{\text{cutoff}}$ move to $G_\text{active}$. The other points remain in $G_\text{inactive}$ and are assumed to have a flat weight of $\bm{a}_{\text{inactive}} < \bm{a}_{\text{cutoff}}$ which means they can be sampled without the use of alias tables (equivalent to uniform sampling within $G_\text{inactive}$). Each element of $G_\text{inactive}$ will retain it's true underlying $\bm{a}_i$ value but this will only be used for sampling if and when it moves into $G_\text{active}$. The thresholding will modify the algorithm slightly and the modifications to the acceptance probabilities are show in Figure \ref{thres_alg}.
\begin{figure}[ht]
\begin{minipage}[t]{0.485\textwidth}
  \vspace{0pt}  
  \renewcommand{\algorithmcfname}{Move}
  \begin{algorithm}[H]
  	  	\SetAlgoHangIndent{0pt}
    \caption{Add (with threshold)}
    Calculate $\bm{a}^{(t)}_\text{active} = \sum\nolimits_{\{j\vert z_j = 0, j \in G_\text{active}\}} \bm{a}^{(t)}_j$ and $\bm{d}^{(t)}_+ = \sum\nolimits_{\{j, z_j = 1\}} \bm{d}^{(t)}_j$\\
    Select $i$ with $z_i = 0$ with prob. $\bm{a}^{(t)}_i/\bm{a}^{(t)}_+$\\
    Accept to toggle $z_i = 1-z_i$ with probability $\alpha_{\text{add}} = \min(1, \hat{\alpha}_{\text{add}})$
    $$
	\begin{aligned}    
    \hat{\alpha}_{\text{add}} = 
    &\tfrac{\pi(\bm{z}^{\prime})}{\pi(\bm{z}^{})}
	\tfrac{\mathrm{P}(a+1,i)\mathrm{P}(i+1,b)}{\mathrm{P}(a+1,b)}  
	\tfrac{1-p}{p} \\
	&\times\tfrac{\bm{d}^{(t)}_i/(\bm{d}^{(t)}_i + \bm{d}^{(t)}_+)}{\widehat{\bm{a}}^{(t)}_i/(\bm{a}^{(t)}_\text{active} + \bm{a}_{\text{inactive}}\lvert G_\text{inactive} \rvert)}
	\end{aligned}
    $$
    where $\widehat{\bm{a}}^{(t)}_i = \bm{a}^{(t)}_i$ if $i \in G_\text{active}$ or $\widehat{\bm{a}}^{(t)}_i = \bm{a}_{\text{inactive}}$ otherwise.
  \end{algorithm}
\end{minipage}\hspace{1em}
\begin{minipage}[t]{0.485\textwidth}
  \vspace{0pt}
  \renewcommand{\algorithmcfname}{Move}
  \begin{algorithm}[H]
  	  	\SetAlgoHangIndent{0pt}
    \caption{Delete (with threshold)}
    Calculate $\bm{d}^{(t)}_+ = \sum\nolimits_{\{j, z_j = 1\}} \bm{d}^{(t)}_j$ and $\bm{a}^{(t)}_\text{active} = \sum\nolimits_{\{j\vert z_j = 0, j \in G_\text{active}\}} \bm{a}^{(t)}_j$\\
    Select $i$ with $z_i = 1$ with prob. $\bm{d}^{(t)}_i/\bm{d}^{(t)}_+$\\
    Accept to toggle $z_i = 1-z_i$ with probability $\alpha_{\text{add}} = \min(1, \hat{\alpha}_{\text{add}})$
    $$
\begin{aligned}    
    \alpha_{\text{del}} = 
    &\tfrac{\pi(\bm{z}^{\prime})}{\pi(\bm{z}^{})}
	\tfrac{\mathrm{P}(a+1,b)}  {\mathrm{P}(a+1,i)\mathrm{P}(i+1,b)}
	\tfrac{p}{1-p} \\
	&\times\tfrac{\widehat{\bm{a}}^{(t)}_i/(\widehat{\bm{a}}^{(t)}_i + \bm{a}^{(t)}_\text{active} +  \bm{a}_{\text{inactive}}\lvert G_\text{inactive} \rvert)}{\bm{d}^{(t)}_i/(\bm{d}^{(t)}_+)}
\end{aligned}
    $$
    where $\widehat{\bm{a}}^{(t)}_i = \bm{a}^{(t)}_i$ if $i \in G_\text{active}$ or $\widehat{\bm{a}}^{(t)}_i = \bm{a}_{\text{inactive}}$ otherwise.
  \end{algorithm}
\end{minipage}
\caption{Adjusted moves for use with thresholding of $\bm{a}_i$ values. Note that $\lvert G_\text{inactive} \rvert$ denotes the cardinality of the inactive set.}
\label{thres_alg}
\end{figure}

\subsubsection{Advanced Adaptation 2: Dual adaptation}
As can be seen in the description of the moves, knowledge of $\alpha_{\text{add}}$ allows one to also calculate $\alpha_{\text{del}}$ quite easily. \citet{griffin2014individual} uses this idea to perform a double or dual adaptation of both $\bm{\bm{a}}^{(t)}$ and $\bm{\bm{d}}^{(t)}$ at each acceptance in the algorithm rather than updating only one of these vectors. The dual adaptation approach is applied without thresholding to the updates in Figure \ref{fig:adaptscheme} and is described in Appendix \ref{app:dual}.

\section{The alternative approach using filtering recursions}
\label{sec:recur}
\citet{raey} provides a filtering recursions approach to inferring changepoint positions. \citet{barry1992} have also used these type of recursive methods for analysis of changepoint problems. We give a brief recap of the filtering recursions method and we will use the method as a comparison to our adaptive changepoint sampler. Some drawbacks of the filtering recursions will also be discussed.

Define for $t=2,\dots,n$
\begin{equation*}
Q(t) = \mathbf{P}(y_t, \dots y_n \vert \text{changepoint at } t-1)
\end{equation*}
and $Q(1) = \mathbf{P}(y_1, \dots y_n)$. \citet{raey} provides a backward recursion for $Q(t)$ as follows, using the marginal likelihood $\mathrm{P}(a,b)$ in \eqref{marglike},
\begin{equation*}
Q(t) =\left(\sum_{i=t}^{n-1} g(i-t+1) \mathrm{P}(t,i) Q(i+1)\right) + \mathrm{P}(t,n) (1-G(n-t)).
\end{equation*}
The function $g(\cdot)$ is the gap length distribution between changepoints (for example, geometric) and $G(\cdot)$ is its cumulative distribution function.

Once the $Q(t)$ values have been calculated (normally on the log scale) it is possible to draw sample of size $N$ from the posterior distribution of positions as follows:
\begin{enumerate}
\item Initialise all $N$ samples to have a changepoint at $t=0$, i.e.\ $\tau_0 = 0$
\item For $t=0, \dots, n-2,$
\begin{enumerate}
\item Find $n_t$, the number of samples for which the last changepoint was at time $t$.
\item If $n_t > 0$, compute the probability distribution for the next changepoint
\begin{equation}
\mathbf{P}(\tau_j \vert \tau_{j-1}) = \mathrm{P}(\tau_{j-1} + 1, \tau_j)Q(\tau_j+1)g(\tau_j - \tau_{j-1})/Q(\tau_{j-1} + 1).
\label{transprob}
\end{equation}
\item Sample $n_t$ times, using Carpenter's algorithm (see Appendix \ref{app:carpenter} for details), from $\mathbf{P}(\tau_j \vert \tau_{j-1})$ and update the $n_t$ samples using a random permutation of the $n_t$ samples.
\end{enumerate}
\end{enumerate}
The filtering recursions approach has the advantage that the design of the method allows one to draw independently from the posterior distribution. Moreover Carpenter's algorithm for sampling the changepoints is fast. This method however has some drawbacks which arise as the dataset increases in size. Firstly, the calculation of the $Q(t)$ values is $\mathcal{O}(n^2)$ as the recursion for each possible ordered pair of points ($i < j$) must be computed before perfect simulation can begin. This calculation time can be reduced by truncating the $Q(t)$ sums once they fail to grow by a certain amount, \citet{raey} suggests $10 \times 10^{-10}$ and we compare various truncation levels in the results section.
The price to pay for this reduced run time is that the truncation introduces an approximation to the recursion algorithm. 
Secondly, hyperparameters must remain fixed throughout the algorithm as a change in hyperparameters or indeed the inclusion of a hyperprior would require complete recalculation of  $Q(t)$. Thirdly, for 
larger datasets ($\approx$ \SI{260000} observations for the largest example considered in this paper) the transition 
probabilities in \eqref{transprob} have the potential to become numerically unstable, as we outline in Section~\ref{sec:diffrecur}. We suggest using the exact algorithm, where possible. However for larger ($>$ \SI{100000} observations) datasets we advocate the use of our adaptive changepoint sampler as it is much more stable, by comparison.

\section{Proof of ergodicity for the Adaptive MCMC algorithm} \label{section_proof}
\label{sec:proof}
There are two parts to proving ergodicity for an adaptive MCMC algorithm on a discrete state space $\mathcal{X}$. The first establishes the notion of simultaneous uniform ergodicity and the second establishes diminishing adaptation. An adaptive MCMC algorithm which satisfies both of these conditions is ergodic by Theorem 1 of \citet{rosenthal2007coupling}.
\subsection{Simultaneous Uniform Ergodicity}
We first recap the definition of uniform ergodicity for a Markov chain, the equivalent Doeblin's condition and simultaneous uniform ergodicity for transition kernels on a state space $\mathcal{X}$.
\begin{definition}{(Uniform ergodicity)}
	\label{def:uniformergodicity}
A Markov chain on a state space $X$ with a transition kernel $P(x , \cdot)$ is called \textit{uniformly ergodic} if 
\begin{equation*}
\sup_{x \in \mathcal{X}}  \norm{P^{n}(x , \cdot) - \pi()}_{\text{TV}} \rightarrow 0 \text{ as } n \rightarrow \infty.
\end{equation*}
where $\norm{\cdot}_{\text{TV}}$ is the total variation norm.
\end{definition}
An equivalent definition by Theorem 16.0.2 \citep{meyn2012markov} states that there exists some $r>1$ and $R < \infty$ such that $\forall x \in \mathcal{X}$
	\begin{equation*}
\norm{P^{n}(x , \cdot) - \pi()}_{\text{TV}} \leq Rr^{-n}.
\end{equation*}
This implies that the convergence takes place at a geometric rate independent of the starting point $x_0 \in \mathcal{X}$ of the algorithm.

Uniform ergodicity is generally difficult to prove directly using Definition \ref{def:uniformergodicity}.  Instead uniform ergodicity can be often more easily checked by equivalence to Doeblin's condition on $\mathcal{X}$. This equivalence is shown in Theorem 16.0.2 of \citet{meyn2012markov} and is repeated here.

\begin{theorem}{(Doeblin's Condition)}
Suppose that Doeblin's Condition holds (as defined in \citet[p 396]{meyn2012markov}) so that there exists a probability measure $\phi$ on the measurable space $(\mathcal{X},\sigma\{\mathcal{X}\})$ with the property that for some $m$, some constant measure $\rho < 1$, some $\beta > 0$ and for a set $A \in \sigma\{\mathcal{X\}}$
\begin{equation*}
\phi(A) > \rho \Longrightarrow \mathrm{P}^{m}(x, A) > \beta
\end{equation*}
then the chain under transition kernel $\mathrm{P}^{m}(x, \cdot)$ is \textbf{uniformly ergodic}.
\end{theorem}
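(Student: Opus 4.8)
The plan is to recover uniform ergodicity from Doeblin's condition along the classical route: convert the Doeblin hypothesis into a uniform minorization of the whole space under $\mathrm{P}^{m}$, and then run a coupling argument to obtain a geometric bound on the total variation distance, which is precisely the equivalent characterisation of uniform ergodicity recalled just after Definition~\ref{def:uniformergodicity}. Since this is Theorem~16.0.2 of \citet{meyn2012markov}, in practice I would cite it and give only the following sketch.

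\emph{Step 1 (minorization).} From the implication $\phi(A) > \rho \Rightarrow \mathrm{P}^{m}(x,A) > \beta$ I would extract a probability measure $\nu$ and an $\varepsilon > 0$ with $\mathrm{P}^{m}(x,\cdot) \ge \varepsilon\,\nu(\cdot)$ for every $x \in \mathcal{X}$, i.e.\ $\mathcal{X}$ is a small set for $\mathrm{P}^{m}$. The idea is that, applied across a partition of $\mathcal{X}$ into pieces whose $\phi$-measure straddles the threshold $\rho$, the Doeblin bound forces $\mathrm{P}^{m}(x,\cdot)$ to deposit at least a fixed amount of mass, distributed in a manner controlled by $\phi$, irrespective of $x$; normalising yields $\nu$ and $\varepsilon$. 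On the finite discrete changepoint space $\mathcal{X}$ this is transparent once $\phi$ has full support: the Doeblin bound gives $\mathrm{P}^{m}(x,\{y\}) \ge c$ for a constant $c>0$ depending only on $\phi,\rho,\beta$ and every state $y$, which is exactly the minorization.

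\emph{Step 2 (coupling).} Given $\mathrm{P}^{m}(x,\cdot) \ge \varepsilon\,\nu(\cdot)$ uniformly in $x$, the standard splitting construction couples two copies of the chain --- one started at an arbitrary $x$, the other at $\pi$ --- so that they coincide after $\ell$ blocks of $m$ steps with probability at least $1-(1-\varepsilon)^{\ell}$, whence $\norm{\mathrm{P}^{m\ell}(x,\cdot) - \pi}_{\text{TV}} \le (1-\varepsilon)^{\ell}$. Setting $r = (1-\varepsilon)^{-1/m} > 1$ and absorbing the at most $m$ leftover steps into a finite constant $R$ gives $\norm{\mathrm{P}^{n}(x,\cdot)-\pi}_{\text{TV}} \le Rr^{-n}$ uniformly in $x$; in particular the chain governed by $\mathrm{P}^{m}$ is uniformly ergodic.

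\emph{Main obstacle.} The only genuine work is Step~1: converting the statement ``sets of large $\phi$-measure are hit with probability at least $\beta$'' into an honest uniform lower bound on the kernel (equivalently, showing the whole space is small). On an abstract state space this needs a careful partitioning argument and possibly an enlargement of $m$; on the finite discrete space of changepoint indicator vectors it is essentially automatic, so for the purposes of this paper the theorem functions as a black box and the real burden --- discharged using the smallness of finite discrete spaces noted in the introduction --- becomes verifying the Doeblin hypothesis for the changepoint proposal kernels.
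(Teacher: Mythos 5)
The paper does not actually prove this statement: its ``proof'' is a one-line citation to Theorem 16.2.3 of \citet{meyn2012markov}, and your plan --- cite the Meyn--Tweedie equivalence and sketch the standard minorization-plus-coupling route --- is essentially the same approach, so the proposal is fine for the role the theorem plays here. One caution on your sketch of Step~1: Doeblin's condition does not by itself yield $\mathrm{P}^{m}(x,\{y\})\ge c$ for \emph{every} state $y$ even when $\phi$ has full support on a finite space (a singleton may have $\phi$-mass at most $\rho$, and a kernel can satisfy the Doeblin bound while never visiting some state, e.g.\ jumping uniformly over all states but one), so passing from ``large-$\phi$ sets are hit with probability $\beta$'' to a uniform minorization $\mathrm{P}^{m'}(x,\cdot)\ge\varepsilon\,\nu(\cdot)$ genuinely requires the textbook argument (with a possibly different $\nu$ and larger $m'$); since you, like the paper, ultimately invoke the cited theorem as a black box, this does not affect the conclusion.
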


\begin{proof}
See Theorem 16.2.3 of \citet{meyn2012markov} and relevant lemmas.
\end{proof}
\noindent
Finally \citet{rosenthal2007coupling} define the notion of simultaneous uniform ergodicity for a collection of transition kernels indexed by $\gamma \in \Gamma$. This definition is repeated here.

\begin{definition}{(Simultaneous Uniform Ergodicity)}
A collection of transition kernels indexed by $\gamma \in \Gamma$ exhibit simultaneous uniform ergodicity if $\forall \, \gamma \in \Gamma $ and $\forall \, x \in X$
\begin{equation*}
\norm{P_{\gamma}^{n}(x , \cdot) - \pi()}_\text{TV} \leq R_{\gamma}r_{\gamma}^{-n}, \text{ where } R_{\gamma} < \infty \text{ and } r_{\gamma} > 1 \text{ for all } \gamma \in \Gamma
\end{equation*}
where $\norm{\cdot}_{\text{TV}}$ is the total variation norm.
\end{definition}
\begin{remark}
The uniform ergodicity parameters $R_\gamma$ and $r_\gamma$ for each kernel may depend on $\gamma$ but not on the states $x \in \mathcal{X}$ as otherwise uniform ergodicity would not hold.
\end{remark}
Verifying multiple Doeblin's Conditions is equivalent to verifying uniform ergodicity for all kernels $P_{\gamma}^{n}(x , \cdot)$. This in turn guarantees simultaneous uniform ergodicity.
\noindent
We will now prove simultaneous uniform ergodicity for the adaptive changepoint sampler.
\begin{theorem}{(Simultaneous uniform ergodicity of the adaptive changepoint sampler)} Let $\bm{\Gamma}^{(t)} = (\bm{a}^{(t)},\bm{d}^{(t)})$ be the set of adaptive weights at iteration $t$ and let $\bm{z}^{(t)}$ be the current state of the chain. Then for all $t$ the transition kernel using the weights $\bm{\Gamma}^{(t)}$,  $\mathrm{P}_{\bm{\Gamma}^{(t)}}(\bm{z}^{(t)}, \cdot)$ is uniformly ergodic.
\end{theorem}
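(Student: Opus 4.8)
The plan is to verify Doeblin's condition for each fixed kernel $\mathrm{P}_{\bm{\Gamma}^{(t)}}$ with $m=1$, exploiting the finiteness of $\mathcal{X} = \{0,1\}^{n-1}$. The key observation is that from any state $\bm{z}$ the chain can reach any other state $\bm{z}^\prime$ in at most $n-1$ steps by a sequence of single add/delete moves (toggling the positions where $\bm{z}$ and $\bm{z}^\prime$ disagree, one at a time), so I would actually work with $\mathrm{P}_{\bm{\Gamma}^{(t)}}^{\,n-1}$. First I would establish a uniform lower bound on the probability of any single accepted toggle: for a given weight configuration $\bm{\Gamma}^{(t)} = (\bm{a}^{(t)}, \bm{d}^{(t)})$, the probability of selecting a particular valid position $i$ is $a^{(t)}_i/\bm{a}^{(t)}_+$ (or $d^{(t)}_i/\bm{d}^{(t)}_+$), which is a strictly positive number depending only on $\bm{\Gamma}^{(t)}$ and $\bm{z}$, not unboundedly small since there are finitely many states and positions; the move-type probability $p$ or $1-p$ is a fixed constant in $(0,1)$; and the Metropolis acceptance probability $\min(1,\alpha_{\mathrm{add}})$ (resp.\ $\min(1,\alpha_{\mathrm{del}})$) is strictly positive because $\pi(\bm{z}^\prime\mid\datavec) > 0$ for every $\bm{z}^\prime$ (the posterior \eqref{marglike} has full support, all marginal likelihoods being positive) and all the weight ratios are finite and positive.

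Next I would chain these bounds: letting $\varepsilon_{\bm{\Gamma}^{(t)}} > 0$ be the minimum over all states $\bm{z}$ and all admissible single toggles of the one-step transition probability to the toggled state, we get $\mathrm{P}_{\bm{\Gamma}^{(t)}}^{\,n-1}(\bm{z}, \{\bm{z}^\prime\}) \geq \varepsilon_{\bm{\Gamma}^{(t)}}^{\,n-1} =: \beta_{\bm{\Gamma}^{(t)}} > 0$ for every pair $\bm{z}, \bm{z}^\prime$, since one admissible path of length $\le n-1$ always exists and any shorter path can be padded with rejected moves (which have positive probability too, or one can simply pick the minimal-length path and bound below by $\beta_{\bm{\Gamma}^{(t)}}$ after noting $\varepsilon \le 1$). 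Taking $\phi$ to be the uniform measure on $\mathcal{X}$, $m = n-1$, $\rho$ any constant in $(0,1)$, and $\beta = \beta_{\bm{\Gamma}^{(t)}}$, Doeblin's condition holds, so by the Doeblin theorem quoted above $\mathrm{P}_{\bm{\Gamma}^{(t)}}$ is uniformly ergodic. Finally, because these bounds were derived for an arbitrary but fixed $\bm{\Gamma}^{(t)}$ using only structural facts (finiteness of $\mathcal{X}$, positivity of $\pi$, positivity of all weights, $p\in(0,1)$), the same argument applies to every kernel in the family, giving the stated conclusion; strictly speaking this paragraph also sets up the simultaneous version needed later, though one should note the $\beta_{\bm{\Gamma}}$ obtained here may degrade as weights become extreme.

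The main obstacle is the last point: a naive bound makes $\varepsilon_{\bm{\Gamma}^{(t)}}$, hence $\beta_{\bm{\Gamma}^{(t)}}$, depend on $\bm{\Gamma}^{(t)}$ in a way that could tend to $0$ as the adaptive weights drift (e.g.\ if some $a_i^{(t)}$ becomes astronomically larger than the others, the selection probability of a competing position collapses). For the present theorem — uniform ergodicity of each \emph{fixed} kernel — this is harmless, since $\beta_{\bm{\Gamma}^{(t)}}$ is still strictly positive for each $t$. The real work (presumably the next result) is to show the bound can be taken \emph{uniform in $\bm{\Gamma}$}, which requires controlling the range of the adapted weights — e.g.\ via the thresholding device ($\bm{a}_{\mathrm{inactive}}$ floor) or an a priori bound on the number of adaptation steps affecting any coordinate — so that $\inf_{\bm{\Gamma}} \beta_{\bm{\Gamma}} > 0$ and simultaneous uniform ergodicity genuinely holds. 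I would flag this explicitly and defer it, keeping the proof of the stated theorem to the per-kernel Doeblin argument above.
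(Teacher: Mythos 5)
Your proposal is correct for the theorem as stated and rests on the same core idea as the paper -- a Doeblin/minorization argument on the finite space $\{0,1\}^{n-1}$, driven by strict positivity of the selection weights, of $p$, and of the posterior -- but the route differs in the details. The paper works with $m=1$: it takes $\phi$ to be the posterior $\pi(\bm{z}\mid\datavec)$ itself, writes the one-step kernel as proposal times acceptance, and lower-bounds it by $\beta^{(t)} = (\epsilon/\omega^{(t)})^2\min_{\bm{z}}\pi(\bm{z}\mid\datavec)$, where $\epsilon$ is the minimum weight and $\omega^{(t)}$ the normalising sum; the fact that a single add/delete move only reaches Hamming-neighbours of the current state is dealt with only by the passing remark that an $m$-step kernel can be obtained by iterating the one-step kernel. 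Your version makes that reachability step explicit: you take $m=n-1$, the uniform reference measure, and chain single-toggle bounds along a path, obtaining $\beta_{\bm{\Gamma}^{(t)}}=\varepsilon_{\bm{\Gamma}^{(t)}}^{\,n-1}$. That is arguably the more careful treatment of the point the paper glosses over, at the cost of a much smaller (exponentially so in $n$) minorization constant and a small loose end of your own: padding a short path with rejected moves needs the rejection probability to be positive (if every proposal from some state had $\alpha\ge 1$ there is a parity obstruction for paths of the wrong length), which is easily patched by toggling a spare coordinate out and back rather than by rejections. Finally, your closing caveat -- that $\beta$ depends on $\bm{\Gamma}^{(t)}$ and a genuinely $\bm{\Gamma}$-uniform bound would need control of the weights -- is a fair remark about the standard (Roberts--Rosenthal) notion of simultaneous uniform ergodicity, but it creates no gap relative to this paper: the definition used here allows $R_\gamma, r_\gamma$ to depend on $\gamma$, and the paper's own constant $\beta^{(t)}$ is likewise $t$-dependent, so per-kernel uniform ergodicity is all that is being claimed and all that either argument delivers.
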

\begin{proof}
As we are working over a discrete state space, $\bm{Z}$, the lower bound of the transition kernel $\mathrm{P}_{\bm{\Gamma}^{(t)}}(\bm{z}^{(t)}, \cdot)$ can be used as the value of $\beta$ in order to satisfy Doeblin's Condition. Denote the overall minimum value of any element of $\bm{a}^{(t)}$ or $\bm{d}^{(t)}$ by $\epsilon > 0$. The existence of this minimum follows from the adaptation scheme in Figure \ref{fig:adaptscheme} where it is not possible for any $\bm{a}^{(t)}$ or $\bm{d}^{(t)}$ to reach 0 when started from a positive value. Take the measure $\phi(\bm{z})$ in Doeblin's Condition to be the posterior distribution of $\bm{z}$, $\pi(\bm{z} \vert y)$. The value $\phi(\bm{z})$ is always positive as the prior for $\bm{z}$ allows for all $2^{n-1}$ values of $\bm{z}$ to occur with non-zero probability.

The 1-step transition moves of our algorithm are the add and delete moves, i.e.\ $m=1$ in Doeblin's Condition. An $m$-step kernel can be achieved by iteration of the $1$-step kernel $m$ times, Doeblin's condition only requires the existence of some $m$ and the aperiodicity of the chain.  Under add or delete moves the $1$-step transition kernel at iteration $t$, $\mathrm{P}_{\bm{\Gamma}^{(t)}}^{m=1}(\bm{z}^{(t)}, \bm{z}^{\prime})$, can be separated into its proposal and acceptance parts
\begin{equation}
\begin{split}
&\mathrm{P}_{\bm{\Gamma}^{(t)}}^{1}(\bm{z}^{(t)}, \bm{z}^{\prime}) \\[1.5ex] 
&= q_{\bm{\Gamma}^{(t)}}^{}(\bm{z}^{(t)}, \bm{z}^{\prime}) \alpha_{\bm{\Gamma}^{(t)}}^{}(\bm{z}^{(t)}, \bm{z}^{\prime}) 
+ \delta \left( \bm{z}^\prime - \bm{z}^{(t)}\right) \{ 1- \sum_{\bm{z}^\prime \neq \bm{z}^{(t)}} 
q_{\bm{\Gamma}^{(t)}}^{}(\bm{z}^{(t)}, \bm{z}^{\prime}) \alpha_{\bm{\Gamma}^{(t)}}^{}(\bm{z}^{(t)}, \bm{z}^{\prime} ) 
\}
\label{eq:fullkernel}
\end{split}
\end{equation}
where $\delta_{\bm{z}^{(t)}} \left( \bm{z}^\prime \right)$ is 1 if and only if $\bm{z}^{(t)}$ is identical to $\bm{z}^{\prime}$ in every element (i.e.\ the Hamming distance between current and proposal $\bm{z}$ is 0). By design of our algorithm, $\delta_{\bm{z}^{(t)}} \left( \bm{z}^\prime \right)$ = 0 as a different vector is always proposed by the add and delete moves and we only need consider the first term of \eqref{eq:fullkernel}
\begin{equation}
\mathrm{P}_{\bm{\Gamma}^{(t)}}^{1}(\bm{z}^{(t)}, \bm{z}^{\prime}) = q_{\bm{\Gamma}^{(t)}}^{}(\bm{z}^{(t)}, \bm{z}^{\prime}) \alpha_{\bm{\Gamma}^{(t)}}^{}(\bm{z}^{(t)}, \bm{z}^{\prime}) 
\end{equation}
The proposal kernel $q_{\bm{\Gamma}^{(t)}}^{}(\bm{z}^{(t)}, \bm{z}^{\prime}) \geq \dfrac{\epsilon}{\omega^{(t)}} >0$, where $\omega^{(t)}$ normalises the $\bm{a}^{(t)}$ or $\bm{d}^{(t)}$ weights depending on which of the add or delete is taking place, therefore
\begin{align*}
\mathrm{P}_{\bm{\Gamma}^{(t)}}^{1}(\bm{z}^{(t)}, \bm{z}^{\prime}) &\geq 
\dfrac{\epsilon}{\omega^{(t)}} \alpha_{\bm{\Gamma}^{(t)}}^{}(\bm{z}^{(t)}, \bm{z}^{\prime})
\\[1.5ex]
&=\dfrac{\epsilon}{\omega^{(t)}}
\min \left\{1, 
\frac{
\pi(\bm{z}^{\prime} \vert y) q_{\bm{\Gamma}^{(t)}}^{}(\bm{z}^{\prime}, \bm{z}^{(t)})
}
{
\pi(\bm{z}^{(t)} \vert y) q_{\bm{\Gamma}^{(t)}}^{}(\bm{z}^{(t)}, \bm{z}^{\prime})
} \right\}
\\[1.5ex]
&=\dfrac{\epsilon}{\omega^{(t)}}\pi(\bm{z}^{\prime} \vert y) q_{\bm{\Gamma}^{(t)}}^{}(\bm{z}^{\prime}, \bm{z}^{(t)})
\min \left\{\frac{1}{\pi(\bm{z}^{\prime} \vert y) q_{\bm{\Gamma}^{(t)}}^{}(\bm{z}^{\prime}, \bm{z}^{(t)})}, 
\frac{
	1
}
{
	\pi(\bm{z}^{(t)} \vert y) q_{\bm{\Gamma}^{(t)}}^{}(\bm{z}^{(t)}, \bm{z}^{\prime})
} \right\}
\\[1.5ex]
&\geq \dfrac{\epsilon}{\omega^{(t)}} \pi(\zvecpr \vert y) q_{\bm{\Gamma}^{(t)}}^{}(\zvecpr, \bm{z}^{(t)}).
\intertext{This inequality holds since $\pi(\bm{z}^{(t)} \vert y) q_{\bm{\Gamma}^{(t)}}^{}(\bm{z}^{(t)}, \bm{z}^{\prime}) < 1$ and $\pi(\bm{z}^{\prime} \vert y) q_{\bm{\Gamma}^{(t)}}^{}(\bm{z}^{\prime}, \bm{z}^{(t)}) < 1$.}
&\geq \dfrac{\epsilon}{\omega^{(t)}}\min_{\bm{z}} \pi(\bm{z} \vert y) \left(\frac{\epsilon}{\omega^{(t)}} \right) \\[1.5ex]
&= \left(\dfrac{\epsilon}{\omega^{(t)}}\right)^2 \min_{\bm{z}} \pi(\bm{z} \vert y) := \beta^{(t)} > 0.
\end{align*}

This verifies that Doeblin's condition holds for the $1$-step proposal kernel under any fixed set of adaptive weights $\bm{\Gamma}^{(t)}$ which is sufficient to prove simultaneous uniform ergodicity.
\end{proof}
\clearpage
\subsection{Diminishing Adaptation}
The second part of the proof is to verify diminishing adaptation for $\mathrm{P}_{\bm{\Gamma}^{(t)}}(\bm{z}, \cdot)$, $\forall t$. Recall the definition of diminishing adaptation \citep{rosenthal2007coupling}
\begin{definition}{(Diminishing adaptation)}
A series of transition kernels indexed by time $t$, $\mathrm{P}_{\bm{\Gamma}^{(t)}} (\bm{z}, \cdot)$, are said to obey diminishing adaptation if
\begin{equation*}
\lim_{t \rightarrow \infty} \sup_{\bm{z}} \norm{\mathrm{P}_{\bm{\Gamma}^{(t+1)}} (\bm{z}, \cdot) - \mathrm{P}_{\bm{\Gamma}^{(t)}} (\bm{z}, \cdot)} = 0.
\end{equation*}
\end{definition}
For this section of the proof, the two other definitions needed are the concept of Lipschitz and bi-Lipschitz continuity of a real-valued function.
\begin{definition}{(Lipschitz continuity)}
A function $f$ is Lipschitz if there exists $K > 0$ such that
\begin{equation*}
\lvert f(x_1) - f(x_2) \rvert \leq K \lvert x_1 - x_2 \rvert.
\end{equation*}
\end{definition}
\noindent
By the Mean Value Theorem this is equivalent to the function $f$ having a bounded first derivative.
\begin{definition}{(bi-Lipschitz continuity)}
A function $f$ is bi-Lipschitz if $f$ and its inverse $f^{-1}$ are both Lipschitz and thus one has
\begin{equation*}
\frac{1}{K} \lvert x_1 - x_2 \rvert \leq \lvert f(x_1) - f(x_2) \rvert \leq K \lvert x_1 - x_2 \rvert.
\end{equation*}
where $K > 0$ is the Lipschitz constant of $f$ and the inverse constant of $f^{-1}$.
\label{eq:bilip}
\end{definition}

\begin{theorem}{}
The adaptive changepoint sampler satisfies diminishing adaptation.
\end{theorem}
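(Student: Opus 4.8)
Simultaneous uniform ergodicity having been established, only diminishing adaptation remains, i.e.\ we must show $\sup_{\bm{z}}\norm{\mathrm{P}_{\bm{\Gamma}^{(t+1)}}(\bm{z},\cdot)-\mathrm{P}_{\bm{\Gamma}^{(t)}}(\bm{z},\cdot)}_{\text{TV}}\to 0$. The plan rests on two ingredients: (i) the per-iteration change of $\bm{\Gamma}^{(t)}=(\bm{a}^{(t)},\bm{d}^{(t)})$, measured on the log scale, is $\mathcal{O}(1/t)$; and (ii) $\mathrm{P}_{\bm{\Gamma}}(\bm{z},\cdot)$ is Lipschitz in the logarithm of each weight with a constant that is uniform in $\bm{\Gamma}$, $\bm{z}$ and $t$. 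For (i): the scheme of Figure~\ref{fig:adaptscheme} leaves $\bm{\Gamma}^{(t)}$ unchanged unless a move is accepted, and then modifies exactly one coordinate $w$ of $\bm{a}^{(t)}$ or $\bm{d}^{(t)}$ (two under dual adaptation), with $\lvert\log w^{(t+1)}-\log w^{(t)}\rvert=\frac{h}{t/n}\lvert\alpha-\alpha_{\text{target}}\rvert\le Chn/t$, where $C$ bounds the bracketed quantity (reading $\alpha$ as the acceptance probability $\min(1,\cdot)\in[0,1]$ gives $C=\max(\alpha_{\text{target}},1-\alpha_{\text{target}})<1$). Thus the log-scale increment vanishes; since $\log$ is a bijection of $(0,\infty)$ onto $\mathbb{R}$ and the weights remain positive, this is exactly the scale on which one wants (bi-)Lipschitz control in the sense of Definition~\ref{eq:bilip}.

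For (ii) I would write the Metropolis--Hastings kernel, for $\bm{z}'\neq\bm{z}$, in the form
\begin{equation*}
\mathrm{P}_{\bm{\Gamma}}(\bm{z},\bm{z}') = q_{\bm{\Gamma}}(\bm{z},\bm{z}')\,\alpha_{\bm{\Gamma}}(\bm{z},\bm{z}') = \min\!\Bigl(q_{\bm{\Gamma}}(\bm{z},\bm{z}'),\ \tfrac{\pi(\bm{z}'\vert y)}{\pi(\bm{z}\vert y)}\,q_{\bm{\Gamma}}(\bm{z}',\bm{z})\Bigr),
\end{equation*}
which is legitimate because $\alpha_{\text{add}}$ and $\alpha_{\text{del}}$ of Figure~\ref{fig:moves} are precisely $\pi(\bm{z}'\vert y)/\pi(\bm{z}\vert y)$ times the ratio of reverse-to-forward selection probabilities. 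In this form the perturbed coordinate $w$ enters each of the two arguments of the minimum through at most one selection probability, and a selection probability is either free of $w$, of the shape $w/(w+s)$, or of the shape $c/(w+s)$ with $0\le c\le s$ (the remainder of a normalising sum dominates the selected weight). In the last two cases $\bigl\lvert\tfrac{d}{d\log w}\,\tfrac{c\,w}{w+s}\bigr\rvert=\tfrac{c\,w}{(w+s)^2}\le\tfrac14$ by the arithmetic--geometric mean inequality, with no dependence on $w$ or on the other (possibly large) weights; and since $\pi(\bm{z}'\vert y)/\pi(\bm{z}\vert y)$ is free of $w$, each argument, hence their minimum, hence $\mathrm{P}_{\bm{\Gamma}}(\bm{z},\bm{z}')$, is $\tfrac14$-Lipschitz in $\log w$. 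The rejection mass $\mathrm{P}_{\bm{\Gamma}}(\bm{z},\bm{z})$ is one minus the sum of the remaining transitions, and the optional adjust step is uniform and weight-free, hence a Markov kernel that does not inflate the $\ell^1$-sensitivity, so summing over the $\mathcal{O}(n)$ states reachable in a single step produces a constant $L=L(n)$, independent of $\bm{\Gamma}$, $\bm{z}$ and $t$, with $\tfrac12\sum_{\bm{z}'}\lvert\mathrm{P}_{\bm{\Gamma}^{(t+1)}}(\bm{z},\bm{z}')-\mathrm{P}_{\bm{\Gamma}^{(t)}}(\bm{z},\bm{z}')\rvert\le\tfrac12 L\,\lvert\log w^{(t+1)}-\log w^{(t)}\rvert$.

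Combining (i) and (ii),
\begin{equation*}
\sup_{\bm{z}}\,\norm{\mathrm{P}_{\bm{\Gamma}^{(t+1)}}(\bm{z},\cdot)-\mathrm{P}_{\bm{\Gamma}^{(t)}}(\bm{z},\cdot)}_{\text{TV}}\ \le\ \frac{L\,C\,h\,n}{2\,t}\ \xrightarrow[t\to\infty]{}\ 0 ,
\end{equation*}
with the left-hand side equal to $0$ on iterations at which no move is accepted, which is diminishing adaptation. The thresholded variant merely replaces some of the $\bm{a}$-weights by the fixed constant $\bm{a}_{\text{inactive}}$, and the dual-adaptation variant perturbs two coordinates per accepted move, so the same estimate carries over up to a factor of $2$.

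The step I expect to be the genuine obstacle is the uniformity claimed in (ii): the weights $\bm{a}^{(t)},\bm{d}^{(t)}$ need not remain in a bounded region as $t\to\infty$ (a coordinate that is adapted upward on a positive fraction of iterations can diverge), so one cannot argue by continuity of the kernel in the weights over a compact set --- a naive Lipschitz constant would grow with $t$ and the argument would break. It is precisely the $\min$/Hastings-ratio rewriting, together with the elementary bound $c\,w/(w+s)^2\le\tfrac14$, that yields a Lipschitz constant free of the (possibly diverging) weights; carrying this bookkeeping through every case of the add and delete moves --- in particular when $w$ sits inside a normalising sum rather than being the selected numerator, and when the perturbed coordinate is a $\bm{d}$-weight appearing in $\alpha_{\text{add}}$ or an $\bm{a}$-weight appearing in $\alpha_{\text{del}}$ --- is the technical heart of the argument.
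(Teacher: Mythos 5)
Your proof is correct in outline and shares the paper's skeleton --- rewrite the Metropolis kernel as $\min\bigl(q_{\bm{\Gamma}}(\bm{z},\bm{z}'),\,\tfrac{\pi(\bm{z}'\vert y)}{\pi(\bm{z}\vert y)}q_{\bm{\Gamma}}(\bm{z}',\bm{z})\bigr)$, reduce diminishing adaptation to the sensitivity of the proposal probabilities, and use the fact that the adaptation scheme changes one weight per accepted move by $O(hn/t)$ on the log scale --- but at the decisive step you take a genuinely different route. The paper converts the vanishing log-increment into a vanishing increment of the weight itself via bi-Lipschitz continuity of $\log$, which implicitly requires the weights to stay in a region where $\exp$ has bounded derivative (only the lower bound $\epsilon$ is addressed there), and then passes from $\lvert \bm{a}_i^{(t+1)}-\bm{a}_i^{(t)}\rvert\to 0$ to the kernel. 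You instead differentiate the \emph{normalised} selection probabilities directly with respect to $\log w$ and obtain, via $cw/(w+s)^2\le\tfrac14$, a Lipschitz constant that is uniform in the (possibly unbounded) weights, which both removes the boundedness issue you correctly identify and yields an explicit $O(n^2/t)$ total-variation rate rather than a bare limit statement. That is a real gain in rigour at the point where the paper is thinnest.

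One slip needs patching: your claim that \emph{each} argument of the minimum, and hence the minimum, is $\tfrac14$-Lipschitz in $\log w$ is not literally true, since the second argument carries the factor $\pi(\bm{z}'\vert y)/\pi(\bm{z}\vert y)$, which is free of $w$ but may be arbitrarily large, making that argument only $\tfrac{1}{4}\pi(\bm{z}'\vert y)/\pi(\bm{z}\vert y)$-Lipschitz. The conclusion survives with a slightly worse constant: on the region where the second branch is active it is bounded above by the first branch, hence by $1$, and there its derivative in $\log w$ satisfies
\begin{equation*}
\Bigl\lvert \tfrac{d}{d\log w}\,\tfrac{\pi(\bm{z}'\vert y)}{\pi(\bm{z}\vert y)}\,q_{\bm{\Gamma}}(\bm{z}',\bm{z})\Bigr\rvert
\;\le\; \tfrac{\pi(\bm{z}'\vert y)}{\pi(\bm{z}\vert y)}\,q_{\bm{\Gamma}}(\bm{z}',\bm{z}) \;\le\; 1,
\end{equation*}
because for both shapes $w/(w+s)$ and $c/(w+s)$ the log-derivative magnitude is at most the value of the function itself; since the Lipschitz constant of a continuous piecewise-differentiable minimum is the supremum of the derivative on whichever branch is active, the kernel entry is $1$-Lipschitz in $\log w$, uniformly in $\bm{\Gamma}$, $\bm{z}$ and $t$. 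Alternatively, the paper's four-case bound on $\min\{A,B\}-\min\{C,D\}$ achieves the same reduction to differences of forward or reverse proposals only. With that repair your argument goes through as stated, including the thresholded and dual-adaptation variants.
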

\begin{proof}
	For a 1-step move from $\bm{z} = \bm{z}^{(t)}$ to $\bm{z}^{\prime}$ define $\Delta^{(t)}$ to be the difference in the transition kernels between iteration $t$ and $t+1$. Again we only need consider the first term of \eqref{eq:fullkernel} in this difference.
\begin{equation}
\begin{split}
\Delta^{(t)} &= \mathrm{P}_{\bm{\Gamma}^{(t+1)}} (\bm{z}, \bm{z}^{\prime}) - \mathrm{P}_{\bm{\Gamma}^{(t)}} (\bm{z}, \bm{z}^{\prime}) \\[1.5ex]
&= q_{\bm{\Gamma}^{(t+1)}}^{}(\bm{z}, \bm{z}^{\prime}) \alpha_{\bm{\Gamma}^{(t+1)}}^{}(\bm{z}, \bm{z}^{\prime}) - q_{\bm{\Gamma}^{(t)}}^{}(\bm{z}, \bm{z}^{\prime}) \alpha_{\bm{\Gamma}^{(t)}}^{}(\bm{z}, \bm{z}^{\prime}) \\[1.5ex]
&= q_{\bm{\Gamma}^{(t+1)}}^{}(\bm{z}, \bm{z}^{\prime})
\min
\left\{1, 
\frac{
\pi(\bm{z}^{\prime} \vert y) q_{\bm{\Gamma}^{(t+1)}}^{}(\bm{z}^{\prime}, \bm{z})
}
{
\pi(\bm{z} \vert y) q_{\bm{\Gamma}^{(t+1)}}^{}(\bm{z}, \bm{z}^{\prime})
} \right\}
- q_{\bm{\Gamma}^{(t)}}^{}(\bm{z}, \bm{z}^{\prime})
\min
\left\{1, 
\frac{
\pi(\bm{z}^{\prime} \vert y)q_{\bm{\Gamma}^{(t)}}^{}(\bm{z}^{\prime}, \bm{z})
}
{
\pi(\bm{z} \vert y) q_{\bm{\Gamma}^{(t)}}^{}(\bm{z}, \bm{z}^{\prime})
} \right\} \\[1.5ex]
&=
\min
\left\{q_{\bm{\Gamma}^{(t+1)}}^{}(\bm{z}, \bm{z}^{\prime}), 
\frac{
\pi(\bm{z}^{\prime} \vert y) q_{\bm{\Gamma}^{(t+1)}}^{}(\bm{z}^{\prime}, \bm{z})
}
{
\pi(\bm{z} \vert y)
} \right\}
- \min
\left\{q_{\bm{\Gamma}^{(t)}}^{}(\bm{z}, \bm{z}^{\prime}), 
\frac{
\pi(\bm{z}^{\prime} \vert y)q_{\bm{\Gamma}^{(t)}}^{}(\bm{z}^{\prime}, \bm{z})
}
{
\pi(\bm{z} \vert y)
} \right\}. 
\end{split}
\end{equation}
It is easy to see that $\Delta^{(t)}$ is the difference of two Metropolis acceptance probabilities scaled by the proposal $q(\cdot)$ and the difference can be one of 4 possible values depending on which quantity is the minimum in each of the minimum operators. For ease of notation, relabel the inner terms of the minimum operator as $A,B,C,D$ so $\Delta^{(t)}$ as $\min\{A, B\} - \min\{C, D\}$. Considering these cases the 4 possible values for $\Delta^{(t)}$ are
\begin{equation}
\Delta^{(t)}
=
\begin{cases}
q_{\bm{\Gamma}^{(t+1)}}^{}(\bm{z}, \bm{z}^{\prime}) - q_{\bm{\Gamma}^{(t)}}^{}(\bm{z}, \bm{z}^{\prime}), & \text{if } A < B, C < D,\\[1.5ex]
\frac{\pi(\zvecpr \vert \datavec)}{\pi(\zvec \vert \datavec)}\left( q_{\bm{\Gamma}^{(t+1)}}^{}(\zvecpr, \zvec) - q_{\bm{\Gamma}^{(t)}}^{}(\zvecpr, \zvec) \right), & \text{if } A > B, C > D, \\[1.5ex]
\frac{\pi(\zvecpr \vert \datavec)}{\pi(\zvec \vert \datavec)}q_{\bm{\Gamma}^{(t+1)}}^{}(\zvecpr, \zvec) - q_{\bm{\Gamma}^{(t)}}^{}(\bm{z}, \bm{z}^{\prime}), & \text{if } A > B, C < D, \\[1.5ex]
q_{\bm{\Gamma}^{(t+1)}}^{}(\bm{z}, \bm{z}^{\prime}) - \frac{\pi(\zvecpr \vert \datavec)}{\pi(\zvec \vert \datavec)}q_{\bm{\Gamma}^{(t)}}^{}(\zvecpr, \zvec), & \text{if } A < B, C > D.
\end{cases}
\label{deltacases}
\end{equation}
For the first two cases of \eqref{deltacases} we only need to show that 
\begin{equation*}
\lvert q_{\bm{\Gamma}^{(t+1)}}^{}(\bm{z}, \bm{z}^{\prime}) - q_{\bm{\Gamma}^{(t)}}^{}(\bm{z}, \bm{z}^{\prime}) \rvert \text{ and } \lvert q_{\bm{\Gamma}^{(t+1)}}^{}(\zvecpr, \zvec) - q_{\bm{\Gamma}^{(t)}}^{}(\zvecpr, \zvec) \rvert \text{ both $\rightarrow$ 0 as } t \rightarrow \infty
\end{equation*}
which amounts to showing that $\lvert \bm{a}_i^{(t+1)} - \bm{a}_i^{(t)} \rvert \rightarrow 0$ and this will be shown below in equation.
For the third case of \eqref{deltacases} we can bound $\Delta^{(t)}$  from above as follows
\begin{equation*}
\frac{\pi(\zvecpr \vert \datavec)}{\pi(\zvec \vert \datavec)}q_{\bm{\Gamma}^{(t+1)}}^{}(\zvecpr, \zvec) - q_{\bm{\Gamma}^{(t)}}^{}(\bm{z}, \bm{z}^{\prime}) \leq q_{\bm{\Gamma}^{(t+1)}}^{}(\bm{z}, \bm{z}^{\prime}) - q_{\bm{\Gamma}^{(t)}}^{}(\bm{z}, \bm{z}^{\prime})
\end{equation*}
because the first term is bounded in the M-H Ratio (A > B) so this is equivalent to the first case of \eqref{deltacases}.
Finally for the final case of \eqref{deltacases}  we can bound the first term again using the M-H Ratio and factor out the likelihood factor as follows
\begin{equation*}
q_{\bm{\Gamma}^{(t+1)}}^{}(\bm{z}, \bm{z}^{\prime}) - \frac{\pi(\zvecpr \vert \datavec)}{\pi(\zvec \vert \datavec)}q_{\bm{\Gamma}^{(t)}}^{}(\zvecpr, \zvec) \leq \frac{\pi(\zvecpr \vert \datavec)}{\pi(\zvec \vert \datavec)} \left( q_{\bm{\Gamma}^{(t+!)}}^{}(\zvecpr, \zvec) - q_{\bm{\Gamma}^{(t)}}^{}(\zvecpr, \zvec) \right).
\end{equation*}

For all cases in \eqref{deltacases}, bounding $\lvert q_{\bm{\Gamma}^{(t+1)}}^{}(\bm{z}, \bm{z}^{\prime}) - q_{\bm{\Gamma}^{(t)}}^{}(\bm{z}, \bm{z}^{\prime}) \rvert$ is enough to establish diminishing adaptation. It therefore only necessary to prove that $\lvert \bm{a}_i^{(t+1)} - \bm{a}_i^{(t)} \rvert \rightarrow 0$ (or equivalently $\vert \bm{d}_i^{(t+1)} - \bm{d}_i^{(t)} \rvert \rightarrow 0$),  as $t \rightarrow \infty$.

Without loss of generality consider only the $\bm{a}^{(t)}_i$ parameter. The general form of the update scheme for $\bm{a}^{(t)}_i$ is
\begin{equation}
\log(\bm{a}_i^{(t+1)}) = \log(\bm{a}_i^{(t)}) + \frac{h}{t/n}(\alpha_\text{add} - \alpha_\text{target})
\end{equation}
and as $t \rightarrow \infty$, with $h<\infty$ and $\alpha_\text{target} < 1$
\begin{equation}
\left\lvert \log(\bm{a}_i^{(t+1)}) - \log(\bm{a}_i^{(t)}) \right\rvert \rightarrow 0.
\label{logdiff}
\end{equation}
To prove \eqref{logdiff} implies $\lvert \bm{a}_i^{(t+1)} - \bm{a}_i^{(t)} \rvert \rightarrow 0$, we must prove that the $\log$ function is bi-Lipschitz.
\noindent
Since $\log(x)$ and its inverse, $\exp{(x)}$ have a bounded first derivative, provided $0 < x < \infty$ which will be satisfied by the existence of $\epsilon > 0$, $\log$ is bi-Lipschitz (Definition \eqref{eq:bilip}).
Therefore
\begin{equation}
\lvert \bm{a}_i^{(t+1)} - \bm{a}_i^{(t)} \rvert \leq K \left\lvert \log(\bm{a}_i^{(t+1)}) - \log(\bm{a}_i^{(t)}) \right \rvert \rightarrow 0,
\end{equation}
and so diminishing adaptation for $\mathrm{P}_{\bm{\Gamma}^{(t)}}$ is established.
\end{proof}

\clearpage
\section{Results}
\label{sec:results}
We will now demonstrate our adaptive algorithm on a number of datasets, varying in size from a small to a large number of observations.
\begin{enumerate}
\item \textbf{Well Log Drilling data} - a small toy dataset to demonstrate the equivalence of filtering recursions and the adaptive changepoint sampler.
\item \textbf{Channel Noise data} - a moderately sized simulated dataset that takes minutes of precomputation for the filtering recursions, but seconds for our algorithm.
\item \textbf{Genome Variation data} - a large data set with over \SI{260000} observations, where it is not possible to use filtering recursions due to the presence of numerical error.
\end{enumerate}

For each of the datasets above, we compare our adaptive MCMC algorithm to the filtering recursions approach of \citet{raey}. We first take a long run of the filtering recursions at full precision and 
the posterior distribution from this long run is taken as the ground truth. Our adaptive MCMC algorithm is then compared to this ground truth by examining an approximate version of the Kullback-Leibler 
divergence in the posterior distribution of the number of changepoints over time. We define this measure of divergence as follows. Let $Q$ be the posterior distribution for the number of changepoints 
based on the ground truth (filtering recursions) and $P$ be the posterior distribution for the number of changepoints based on our adaptive MCMC algorithm. The divergence is defined as
\begin{equation}
\text{D}^{}_{\delta}(P \vert Q) = \sum_{k=0}^{n-1} \left\lbrack (1-\delta)P(i) + \delta\frac{1}{n} \right\rbrack \log \frac{(1-\delta)P(i) + \delta\frac{1}{n}}{(1-\delta)Q(i)+\delta\frac{1}{n}}.
\end{equation}
The correction parameter $\delta$ is necessary to ensure that the support of $P$ and $Q$ overlap and is chosen small ($ < 1\times10^{-10}$). Note that the Kullback-Leibler divergence results when $\delta=0$.

\subsection{Dataset 1 - Gaussian mean changepoint - Well Log Drilling data}
The problem of detecting changepoints in well log drilling data has been studied numerous times in the changepoint literature 
\citep{raey, ruanaidh2012numerical}. The well log drilling dataset originates from \citet{ruanaidh2012numerical} and consists of \SI{4050} probe 
measurements of the nuclear-magnetic response of underground rocks. The data was obtained by lowering the detection probe into a pilot drilled hole in 
the rock	 and recording the nuclear-magnetic response at discrete depth intervals. A changepoint is thought to occur when the rock type changes 
and such a change in signal is observed in the dataset. The data is shown in Figure \ref{welllog_data} with outliers removed as in \citet{raey}. These 
data have previously been analysed using filtering recursions to compute the posterior distribution of the number and position of changepoints. We will 
show that our algorithm reaches the same stationary distribution as the filtering recursions approach in the same time. The approximate filtering 
recursions using a lower level of precision will be also compared to our algorithm.
\begin{figure}[ht]
\centering
\includegraphics[width=\textwidth]{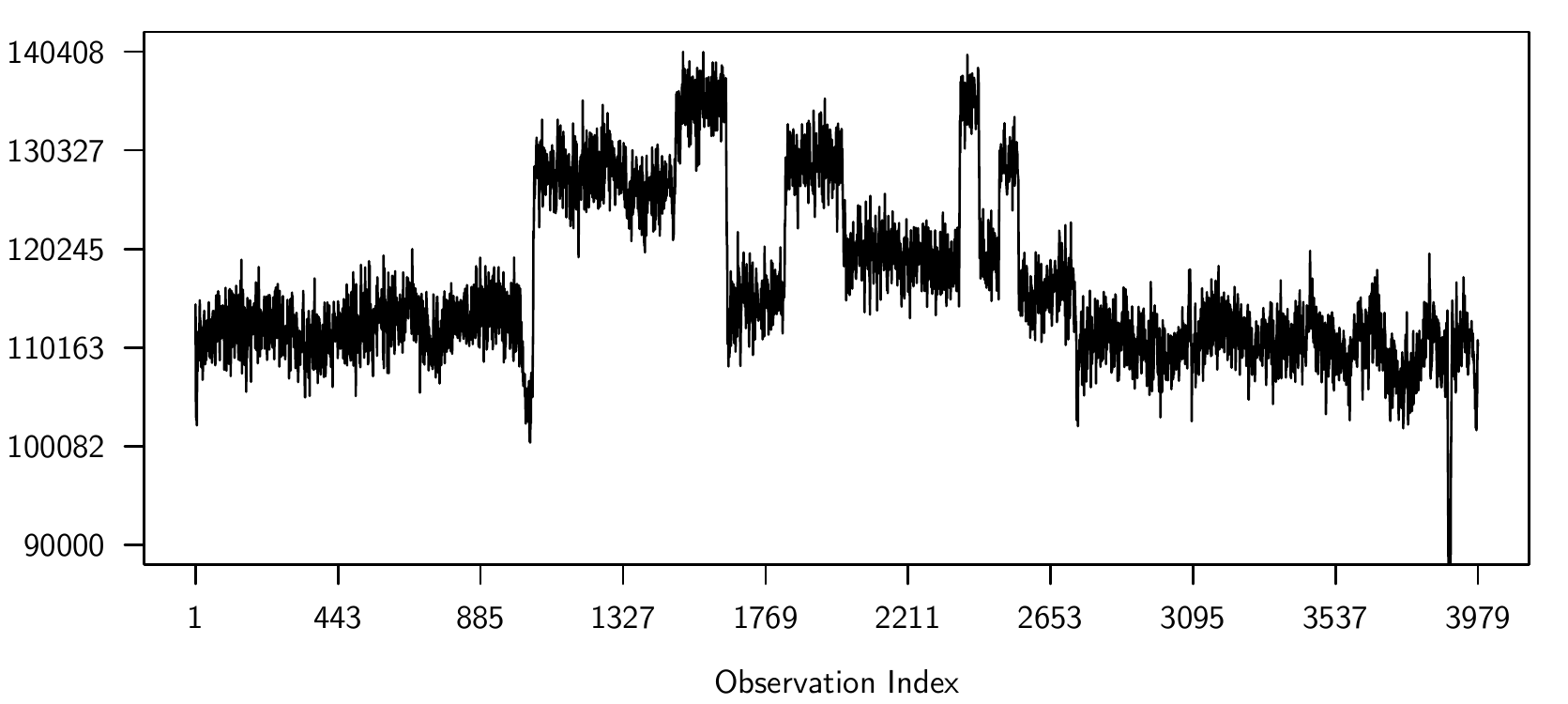}
\caption{\linespread{1.1}\selectfont{}\textbf{Well Log data -} The data consists of 3979 observations after outliers have been removed. Visually there are many changepoints in the data, prior to analysis.}
\label{welllog_data}
\end{figure}

\subsubsection{Well Log Drilling - Model}
We follow the approach of \citet{raey} by considering a Geometric ($p$ = 0.013) prior on the gap length between successive changepoints. The observations between changepoints are modelled as $\mathcal{N}(\mu_i, \sigma^2)$, where $\mu_i$ is the mean parameter for the $i$th segment and $\sigma$ is fixed to 2{,}500. Independent $\mathcal{N}(115{,}000, \tau^2 \sigma^2)$ priors are placed on each $\mu_i$ with $\tau^2$ set fixed to 16. Using the methods of Section \ref{sec:collapselike} the segment marginal likelihood can be shown (Appendix \ref{app:normallike}) to be
\begin{equation}
\mathrm{P}(a,b)=
(2\pi\sigma^2)^{-k/2}
(k\tau^2 + 1)^{-1/2} 
\exp 
\left( 
-\frac{1}{2\sigma^2} 
\left[
\left(s_2 - \frac{s_1^2}{k}\right) + \frac{k}{k\tau^2 + 1}\left(m - \frac{s_1}{k} \right)^2
\right]
\right).
\end{equation}
The quantities $s_1$ and $s_2$ are the sum and the sum of squares of the data $\{y_a, \dots y_b\}$, respectively.

\subsubsection{Well Log Drilling - Results \& Algorithm Comparison}
The results for the Well Log Drilling data run across adaptive MCMC, non-adaptive MCMC and filtering recursions are shown in Figure \ref{welllog_results}. The filtering recursions was run 
at 3 different levels of precision (full precision, 1e-6, 1e-4) to give 5 sets of results.
\begin{figure}[ht]
\centering
    \includegraphics[width=6.69423in,height=2.8in]{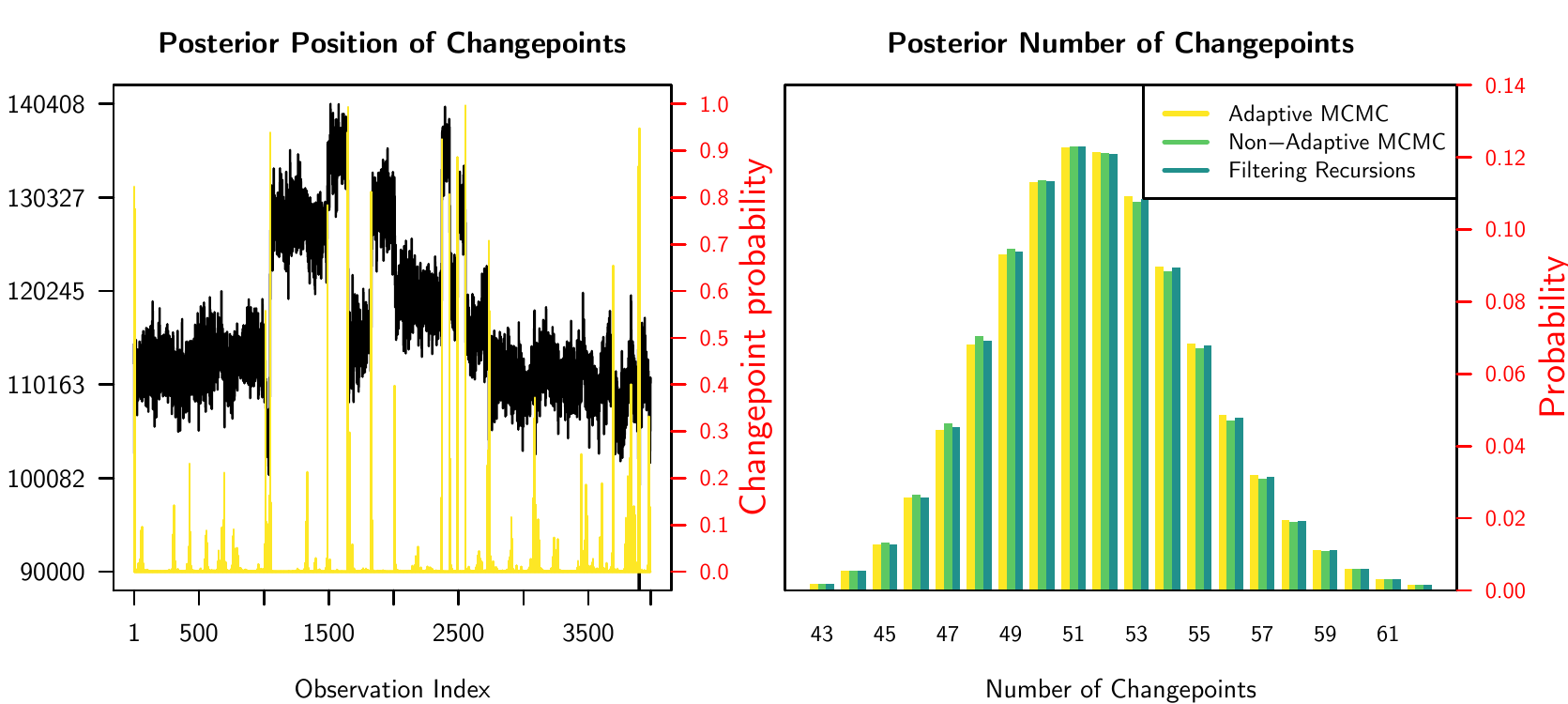}
\caption{\linespread{1.1}\selectfont{}\textbf{Well log data -} The left panel presents the estimated posterior probability of a changepoint at each observation based on the adaptive
MCMC algorithm. The right panel presents the estimated posterior probability of the number of changepoints for each of the adaptive MCMC, non-adaptive
MCMC and filtering recursions algorithms. This illustrates that each algorithm converges to the same stationary distribution.}
\label{welllog_results}
\end{figure}
The adaptive MCMC changepoint sampler was run for 5 seconds (\SI{16000000} iterations) with adaptive parameter $h=0.00119$ and a target acceptance 
rate of $15$\%. The non-adaptive MCMC sampler was run for 5 seconds (\SI{2000000} iterations).
The results in the right panel of Figure \ref{welllog_results} illustrate that the modal number of changepoints is estimated as 51 for all algorithms.
All algorithms capture the same posterior distribution of the number of changepoints and 
changepoint positions. 
Additionally, the left panel of Figure~\ref{welllog_results} displays the posterior position of changepoints from the adaptive MCMC run 
which (although not presented here) was very similar to the non-adaptive MCMC and filtering recursion algorithms.
The acceptance rates for the adaptive and non-adaptive MCMC algorithms were 15.31\% and 15.10\%, respectively and both the 
adaptive and non-adaptive MCMC algorithms were started from the same changepoint configuration, 40 changepoints randomly distributed throughout the data.

To compare the results of the adaptive MCMC changepoint sampler against filtering recursions, we compare the divergence of the adaptive and 
non-adaptive MCMC changepoint samplers to the output of filtering recursions run at full precision for 100 million chains from Carpenter's algorithm. 
For the 2 lower levels of precision (1e-6, 1e-4) in Figure \ref{welllog_div}, the filtering recursions algorithm fails to target the correct 
posterior once the precision level of the recursions equals $1\times 10^{-4}$. The adaptive MCMC changepoint sampler appears to converge marginally 
quicker to the target distribution, in the sense of reaching a low divergence, than the non-adaptive and filtering recursions algorithms. 
However we would overall recommend the use of filtering recursions for datasets of this size and smaller. The adaptive algorithm is marginally faster 
than the non-adaptive version and with a higher acceptance rate (15.31\%). This outlines that the Adaptive MCMC is competitive not only to the 
filtering recursions but also to the non-adaptive algorithm.
\begin{figure}[!ht]
\centering
\includegraphics[width=6.69423in]{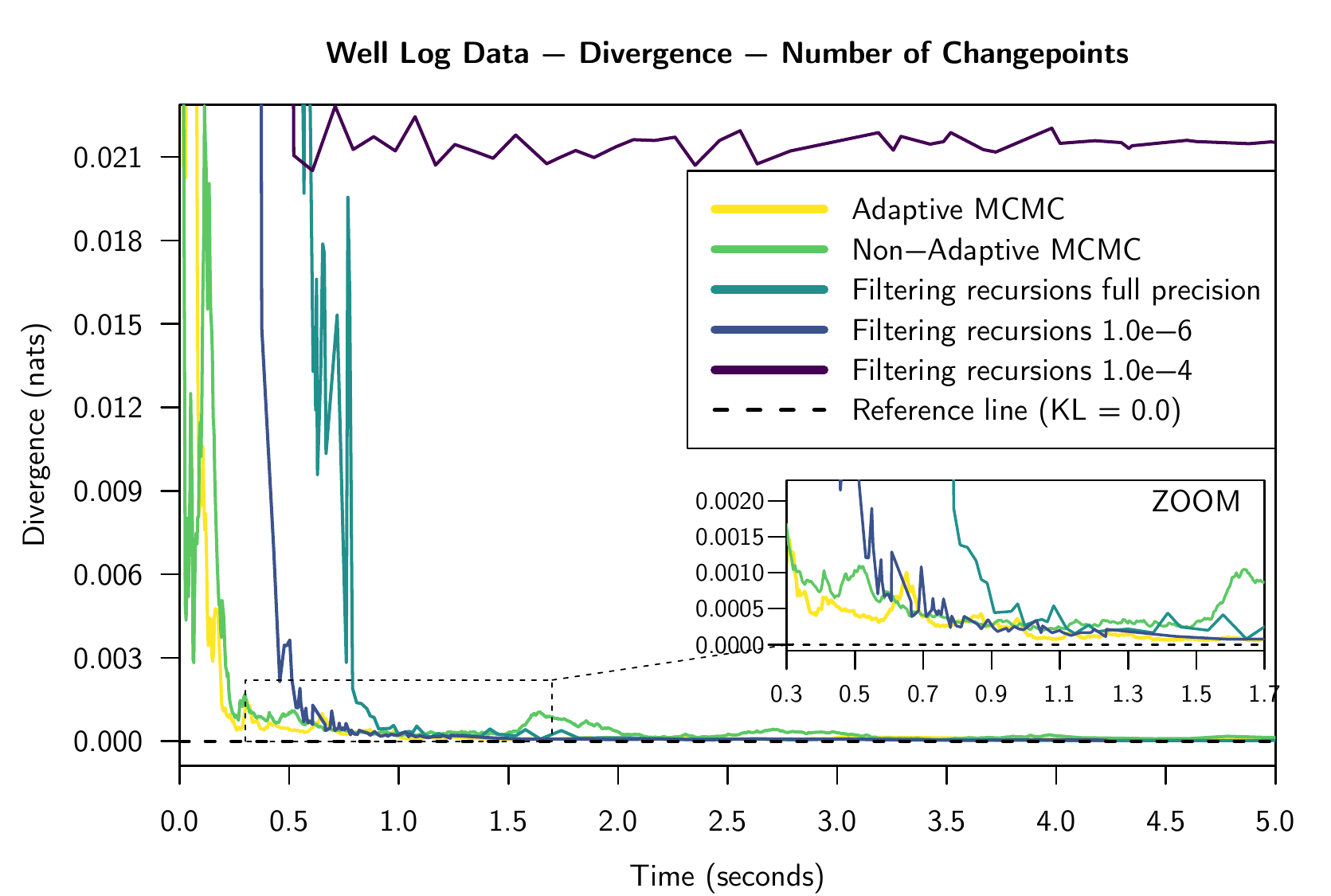}
\caption{\linespread{1.1}\selectfont{}\textbf{Well log data -} Divergence, $D_{\delta}$, between a precise estimate of the posterior distribution of the number of change points based on a long run of 
the filtering recursion algorithm to the adaptive and non-adaptive MCMC algorithm. This plot shows the convergence to the ground truth. All chains converge to the ground truth 
except for the low precision recursions. The adaptive algorithm is the most competitive of the MCMC algorithms.}
\label{welllog_div}
\end{figure}

\subsection{Dataset 2 - Gaussian precision changepoint - Channel Noise Data}
Variations in a signal can be detected by considering the change in variance around a fixed mean.
\begin{figure}[ht]
	\centering
	\includegraphics[width=\textwidth]{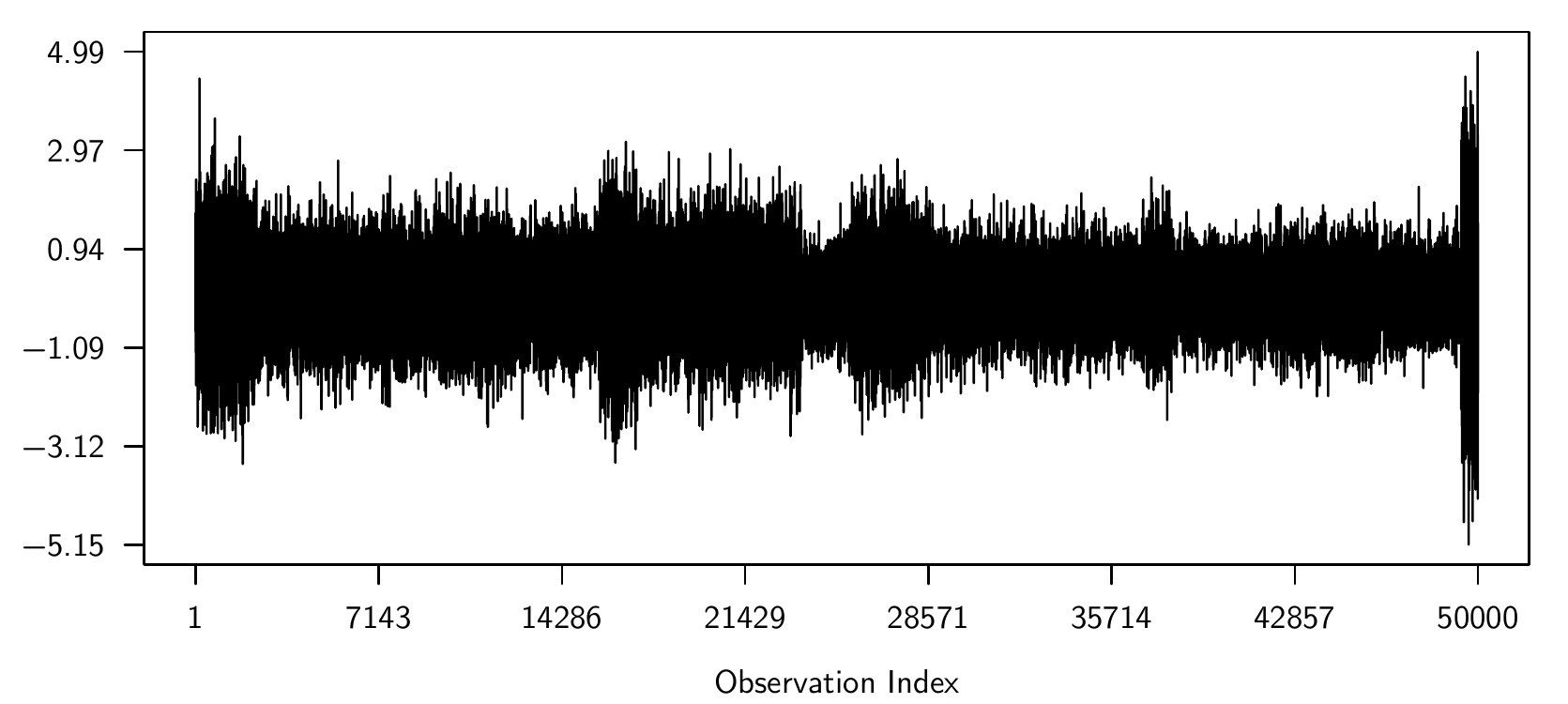}
	\caption{\linespread{1.1}\selectfont{}\textbf{Channel Noise data -} A simulated dataset where the variance is assumed to change over time around a fixed mean.}
	\label{channel_noise}
\end{figure}
 For example a web server may exhibit rapid variations in traffic across a period of time or a failing component of a machine may lose its precision 
 as it fails. If we assume a constant mean for each of the segments and allow there to be a change in precision $\lambda = \frac{1}{\sigma^2}$ at a 
 changepoint we can model a process such as shown in Figure \ref{channel_noise}.

The likelihood for each observation $y_i$ is $\mathcal{N}(\mu, \lambda^{-1})$ for a fixed $\mu$. Assuming a prior on the precision, 
$\lambda \sim \text{Gamma}(\alpha_0, \beta_0)$, allows one to integrate over $0 < \lambda < \infty$, leaving a marginal likelihood
\begin{equation}
\mathrm{P}(a,b) = \frac{(2\pi)^{-n/2}\Gamma(k/2 + \alpha_0)}{ \left(\beta_0 +\sum\limits_{i=a}^{b} (x_i - \mu)^2/2 \right)^{\alpha_0 + k/2}}, \hspace{2em} \text{where } k = b-a+1.
\label{variance_marg}
\end{equation}
For this data the hyperparameters were set to $\alpha_0 = 12.0, \beta_0 = 4.8$ and $\mu = 0$. The parameter $\mu$ can be set to 0 prior to 
analysis provided the data is shifted using its known mean. A geometric gap prior was placed on $\zvec$ with $p = 0.0006$.
\subsubsection{Results}
The results for the channel noise data are shown in Figure \ref{fig:var_change_results} for filtering recursions, the adaptive MCMC changepoint sampler 
and the non-adaptive MCMC changepoint sampler.
\begin{figure}[ht]
\centering
\includegraphics[width=6.69423in]{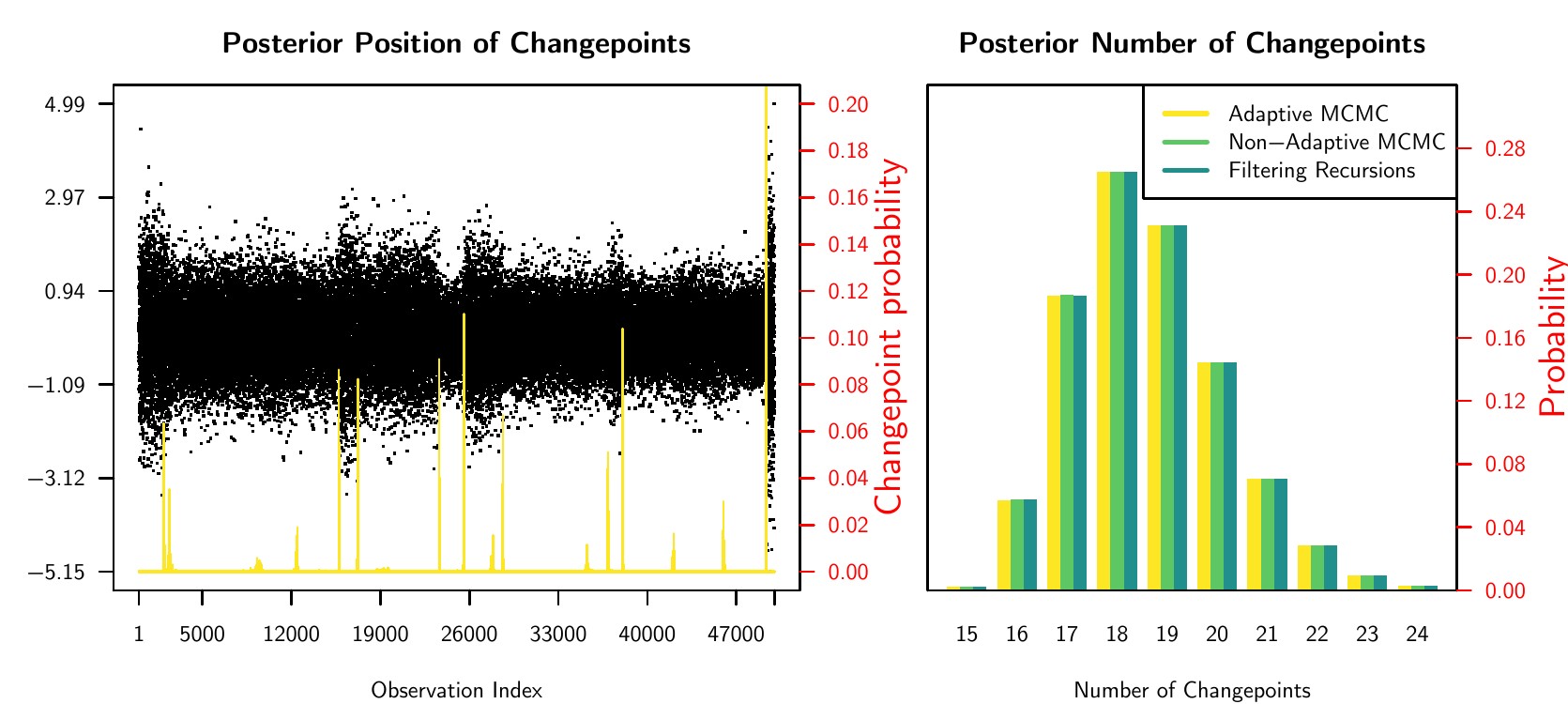}
\caption{\linespread{1.1}\selectfont{}\textbf{Channel Noise data -} The estimated posterior probability of a changepoint at each observation based on the adaptive
	MCMC algorithm is shown (left panel). The right panel presents the estimated posterior probability of the number of changepoints for each of the adaptive MCMC, non-adaptive
	MCMC and filtering recursions algorithms. This illustrates that each algorithm converges to the same stationary distribution.}
\label{fig:var_change_results}
\end{figure}
All 3 algorithms give a modal 18 number of changepoints in the data and each algorithm captures the full posterior distribution for both the positions and number of changepoints. The adaptive MCMC and non-adaptive MCMC algorithms were each run for 300 seconds, \SI{60000000} iterations and \SI{67000000} iterations respectively. The filtering recursions were run for the necessary precomputation 
time of $572$ seconds and then a further \SI{8500} seconds using Carpenter's algorithm (\SI{100000000} chains). Extra runs of the filtering recursions were run at precision levels (1e-12, 1e-10 and 1e-8) for comparison.
\begin{figure}[ht]
\centering
\includegraphics[width=6.69423in]{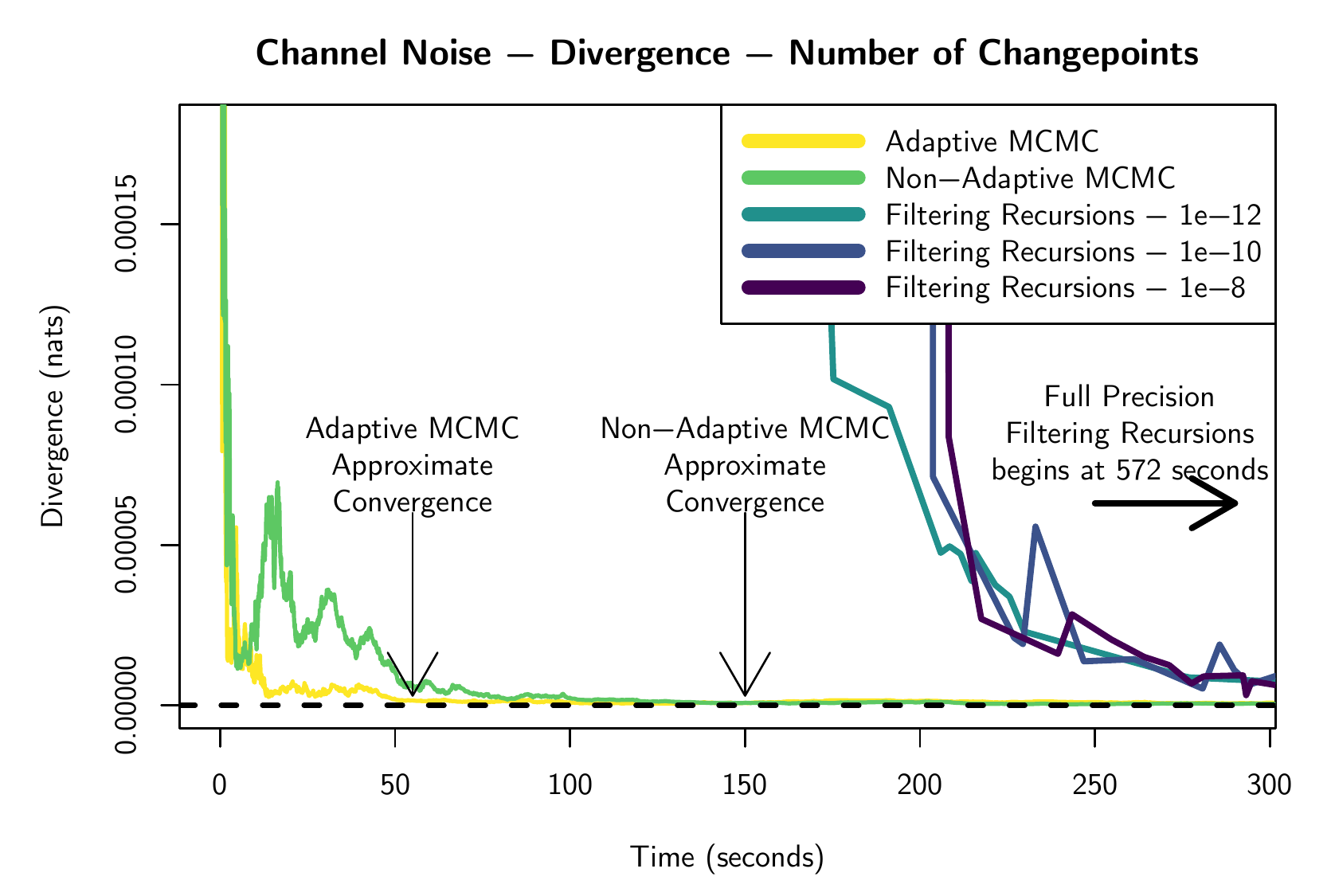}
\caption{\linespread{1.1}\selectfont{}\textbf{Channel Noise data -} Divergence, $D_{\delta}$, between a precise estimate of the posterior distribution of the number of change points based on a long run of 
the filtering recursion algorithm to the adaptive and non-adaptive MCMC algorithm. The adaptive MCMC algorithm outperforms non-adaptive MCMC and filtering recursions for all precision levels. We mark a 
convergence point for the adaptive MCMC of 53 seconds with divergence $1.43 \times 10^{-6}$ nats in the posterior distribution of the number of changepoints. The non-adaptive MCMC algorithm takes 150 seconds
to reach this level of divergence. Note that (although not shown on the plot) the full-precision filtering takes \SI{1738} seconds to reach this same level of divergence} 
\label{var_change_div}
\end{figure}

For this example, it is possible to compare the convergence properties of the adaptive MCMC, non-adaptive MCMC and filtering recursions. Due to the extended precomputation time required for the filtering recursions 
for datasets of this size, we can only compare the two algorithms after the precomputation has completed. We compare the algorithms by examining the time to converge to a certain level of Divergence, $D_{\delta}$. In 
Figure \ref{var_change_div} we mark an approximate convergence point at 53 seconds for the adaptive changepoint sampler with a divergence of $1.43 \times 10^{-6}$ nats. The filtering recursions is then 
run at full precision until it reaches this level of divergence or below, which takes \SI{1738} seconds. The results of this analysis are show in Figure \ref{var_change_div} and Table \ref{tab:divergence}, with Table \ref{tab:divergence} showing the relative speed of each algorithm. There is a vast in improvement using the adaptive MCMC algorithm.
\begin{table}[!ht]
	\centering
	\begin{tabular}{l|l|l|l|}
		\cline{2-4}
		& \textbf{Adaptive} & \textbf{Non-Adaptive} & \textbf{Filtering Recursions} \\ \hline
		\multicolumn{1}{|l|}{\textbf{Relative Speed}} & 1.0 (53 seconds)               & 2.83 (150 seconds)               &   32.8 (1738.64 seconds)                          \\ \hline
				\multicolumn{1}{|l|}{\textbf{Acceptance Rate}} & 12.1\%               & 10.9\%               &   -                          \\ \hline
	\end{tabular}
	\caption{\linespread{1.1}\selectfont{}\textbf{Channel Noise data -} Relative speed and acceptance rates of each (MCMC) algorithm are shown.}
	\label{tab:divergence}
\end{table}

\subsection{Gaussian mean changepoint - Large Data example}
Genome variation profiling arises in the analysis of neuroblastoma (cancer) samples. The data comes in the form of DNA single nucleotide polymorphism (SNP) arrays. We analyse the $\log_2\text{ratioAB}$ series of sample GSM333824 UTP-N-12NMapping250KNsp from the SegAnnDB repository \citep{hocking2014seganndb}. The data consisting of \SI{262230} observations is displayed in Figure \ref{genome_plot}. Further details of the data and collection can be found in \citet{chen2008oncogenic}.
\begin{figure}[ht]
\centering
\includegraphics[width=\textwidth]{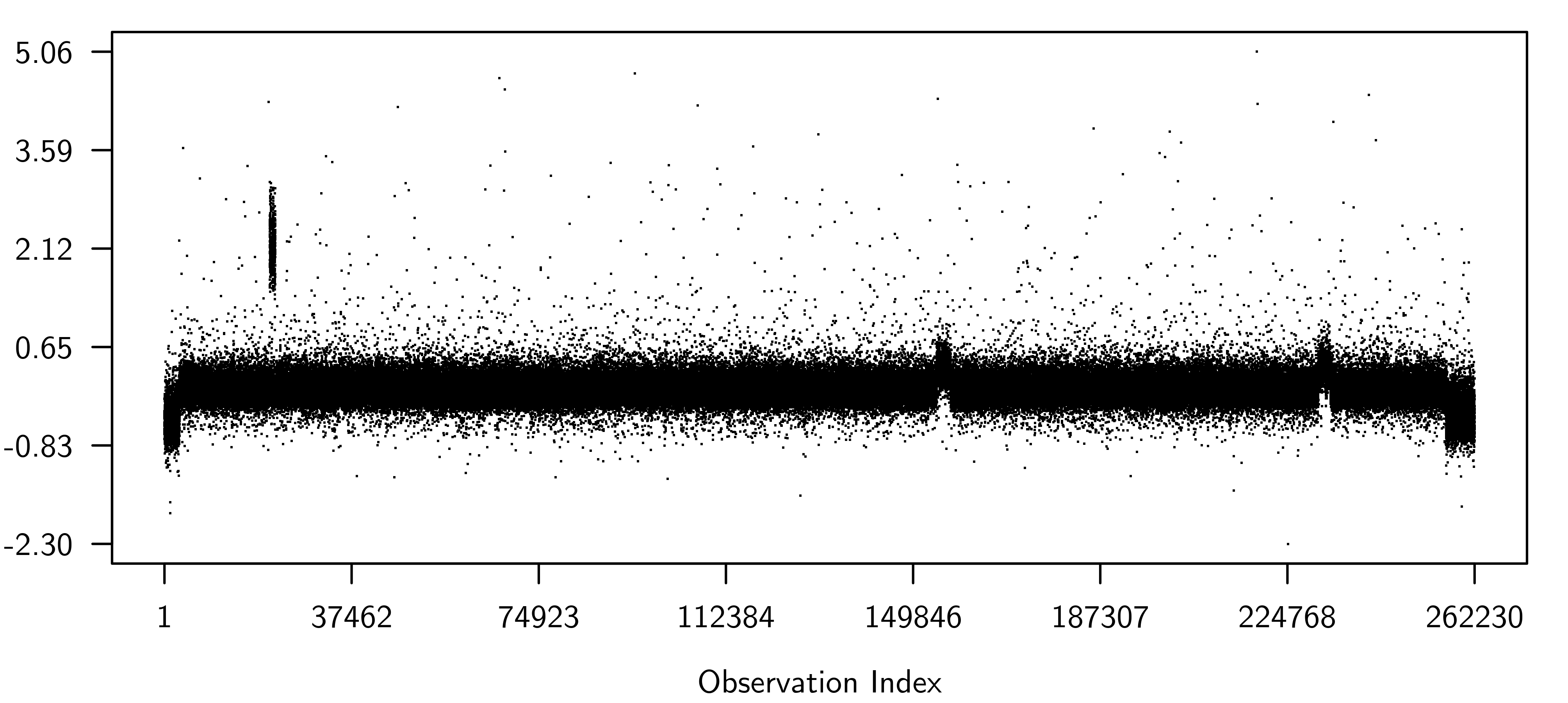}
\caption{\linespread{1.1}\selectfont{}\textbf{Genome variation dataset - }\SI{262230} observations from the neuroblastoma sample GSM333824 UTP-N-12NMapping250KNsp.}
\label{genome_plot}
\end{figure}

\noindent This is a large dataset and we analyse the change in the mean only. We assume a Geometric prior for the changepoint positions with parameter $p = 5.72 \times 10^{-5}$. The likelihood is take as $\mathcal{N}(\mu_j, 0.13)$ and the prior for $\mu_j$ is $\mathcal{N}(0, 116.0)$.

\citet{hocking2014seganndb} analysed these data sequences using the PrunedDP algorithm of \citet{rigaill2015pruned}. Their algorithm works by fixing the number of changepoints $1 < k < k_\text{max}$ in such a way to minimise the least squared error of all possible segmentations using $k+1$ segments. The value of $k_\text{max}$ is chosen as low as 20. We find for our analysis we find between $219$ and $260$ changepoints using the adaptive changepoint sampler. 
\subsubsection{Difficulty with filtering recursions}
\label{sec:diffrecur}
For a dataset of this size the numerical stability of the filtering recursions can cause problems. In the calculation of the transition probabilities \eqref{transprob}, which are needed to sample from the recursions using the Carpenter's algorithm \citep{carpenter1999improved}, there is potential to encounter numerical errors arising from building the forward proposal distribution of the next changepoint. For this dataset, considering every possible changepoint location $j \in \{1, \dots n-1\}$, there will be a maximum $\binom{262,230}{2} = 3.43 \times 10^{10}$ transition probabilities $\mathrm{P}(\tau_j \vert \tau_{j-1})$ to calculate, see equation \eqref{transprob}.
Many of the $\mathrm{P}(\tau_r \vert \tau_{j-1})$ terms will be very small with values less than subnormal machine precision even when computed on the $\log$ scale. Numerically, transition probabilities close to 0 are regarded as having negligible contribution to proposing changepoints and will not be sampled. However for datasets where changepoints are far apart i.e.\ $\tau_r \gg \tau_{j-1}$, the calculation of the cumulative distribution which is needed to propose the next changepoint,
\begin{equation}
\mathrm{P}(\tau \leq \tau_r \vert \tau_{j-1}) = \log \left(e^{\mathrm{P}(\tau_r \vert \tau_{j-1})} + e^{\sum_{i=1}^{r-1} \mathrm{P}(\tau_i \vert \tau_{j-1})} \right),
\label{logsum}
\end{equation}
will not correctly accumulate all of these small probabilities. Since the number of small probabilities is significantly large, this leads to those probabilities greater than subnormal machine probabilities to be artificially inflated relative to the magnitude they would normally appear had the small probabilities being accumulated to infinite precision. The effect of this is that these points will be chosen as changepoints more frequently than they should be and points with small probabilities never to be chosen even though taken together they consume non-negligible mass of the transition distribution. This agrees empirically with our analysis for this dataset and other even larger datasets.
\subsubsection{Results for the Adaptive Algorithm}
The adaptive algorithm was run for \SI{40000000000} iterations with \SI{20000000000} iterations removed by burn-in. The long length of burn-in was necessary by the analysis in Figure \ref{fig:mapovertime}.
The adaptive parameter $h$ was set to 0.001, while a target acceptance rate of 1.0\% was chosen to tune the adaptive scheme. The algorithm took \SI{20665} seconds on an Intel i7 3.40GHz and the achieved acceptance rate was 1.048\%. Over 210 changepoints are detected by the adaptive changepoint sampler. This includes the changepoints detected by the SegAnnDB repository and more changepoints at other locations. The PrunedDP algorithm of \citet{rigaill2015pruned} 
requires the user to choose an optimal $k$ in a post processing step. This contrasts with our adaptive algorithm which presents the user with the posterior distribution of the number of changepoints $k$, 
and so allowing the user to examine the \textit{a posteriori} probability of various $k$ and examine the relevant strength of different segmentations.
\begin{figure}[ht]
	\centering
	\includegraphics[width=6.69423in, height=3in]{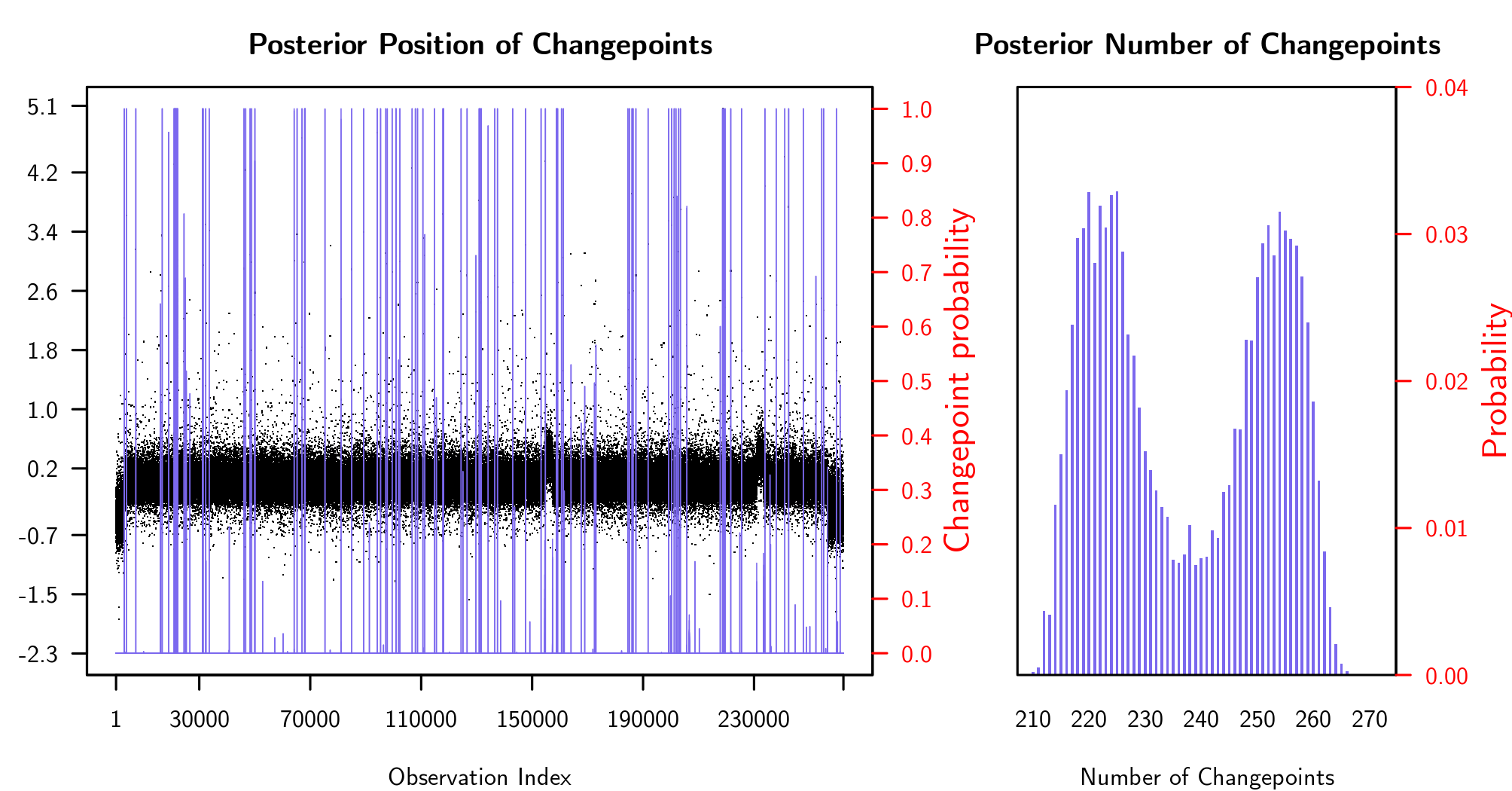}
	\caption{\linespread{1.1}\selectfont{}\textbf{Genome variation dataset -} The adaptive algorithm captures a bimodal posterior distribution for the number of changepoints (right panel). The posterior distribution of the changepoints positions is presented in the left panel.}
	\label{fig:largedata_posterior}
\end{figure}
\begin{figure}[!ht]
	\centering
	\includegraphics[width=6.69423in]{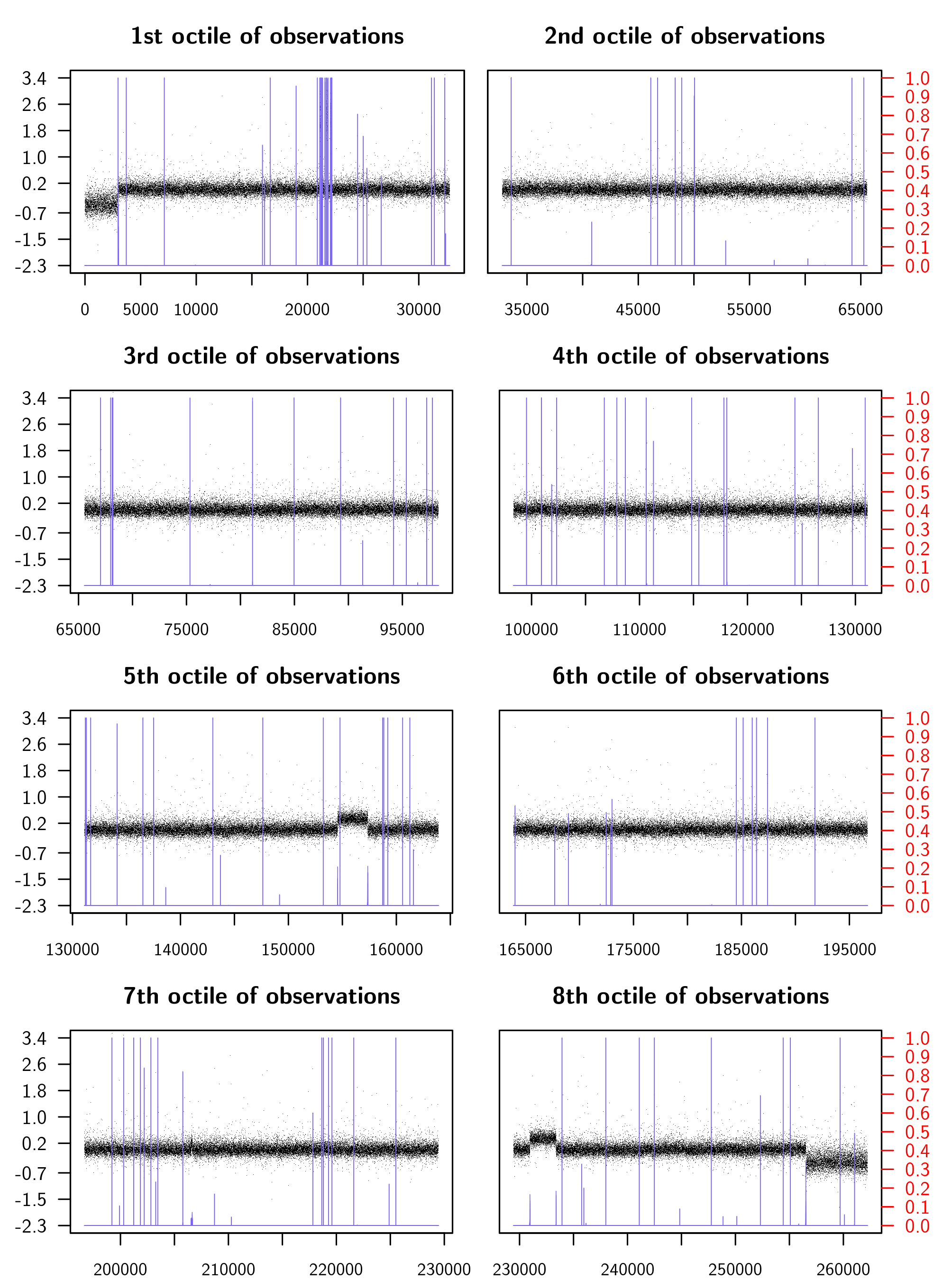}
	\caption{\linespread{1.1}\selectfont{}\textbf{Genome variation dataset -} More detailed examination of the location of changepoints. In the first octile, high changepoint activity is observed. In the first half of the seventh octile, many outliers are detected by the adaptive algorithm.}
	\label{fig:detailed_largedata}
	\end{figure}
\clearpage
\subsubsection{Algorithm Comparison - Adaptive and Non-Adaptive MCMC}
It is not possible to run the filtering recursions for this data due to issues discussed in Section \ref{sec:diffrecur}. In particular, and in contrast with the two previous examples, it is not possible to assess how well
each algorithm converges to the target posterior distribution. However we can still provide a good indication of the convergence of each of the adaptive and non-adaptive MCMC algorithms by exploring the trajectory of 
the state of each chain towards the maximum \emph{a posteriori} of the target distribution. We performed a 6 hour run of both algorithms for over \SI{80000000000} iterations and monitored the maximum a posteriori (MAP) 
estimate achieved. The results are shown in Figure \ref{fig:mapovertime}.
\begin{figure}[ht]
	\includegraphics[width=\textwidth]{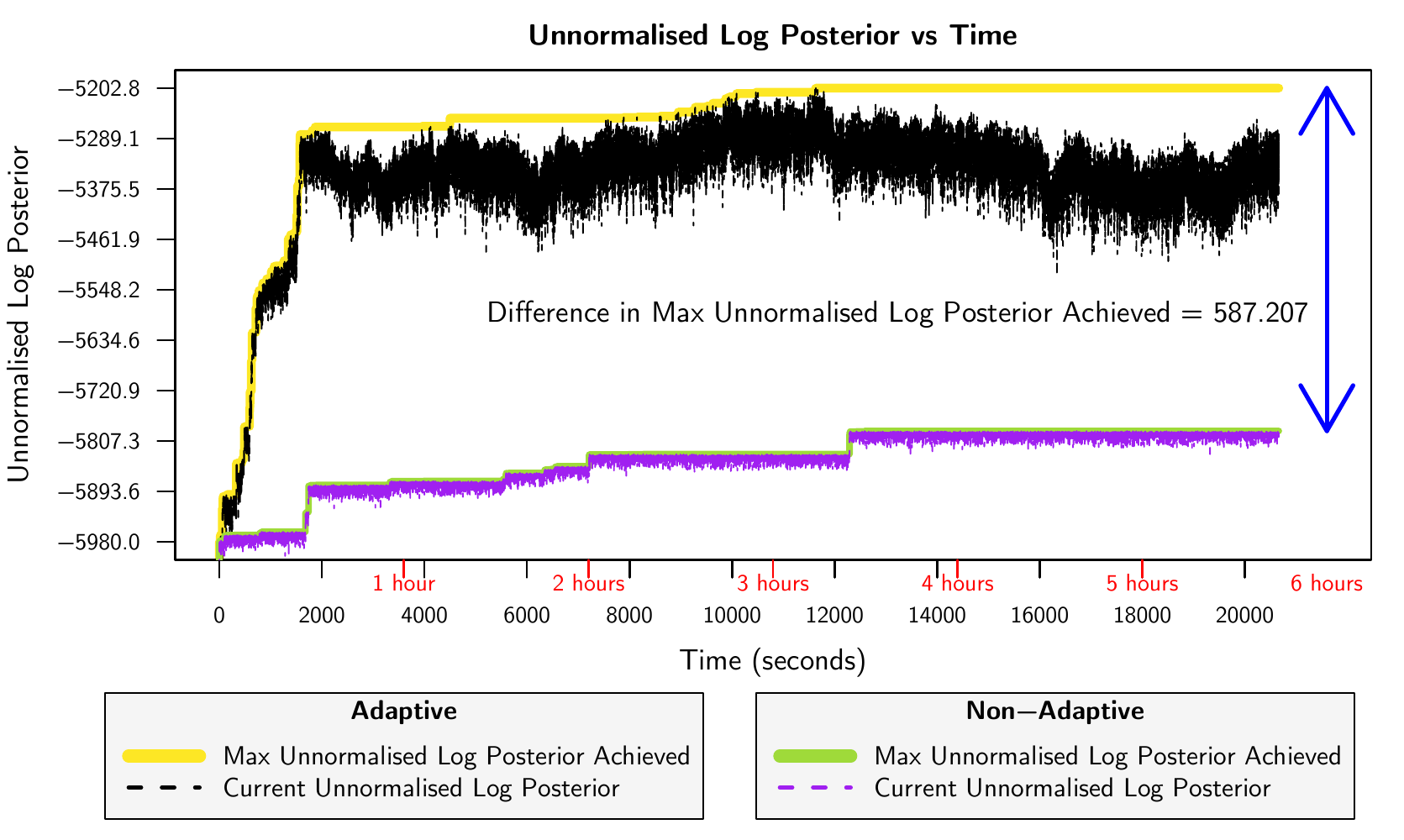}
	\caption{\linespread{1.1}\selectfont{}\textbf{Genome variation dataset -} Comparison between the trajectory of the log unnnormalised posterior for the adaptive and non-adaptive algorithms. 
	It is clear that the adaptive algorithm climbs to an area of high posterior probability many times faster than the non-adaptive algorithm.}
	\label{fig:mapovertime}
\end{figure}
The Adaptive algorithm reaches a higher area of the posterior after about \SI{2000} seconds and continues to find higher areas of posterior mass. In constrast the non-adaptive version is much slower to climb to this 
area and after 6 hours still had not reached the MAP estimate of the Adaptive algorithm. This gives some indication that the adaptive MCMC algorithm is better able to reach the high-posterior density regions than
the non-adaptive MCMC algorithm. We therefore conclude for datasets of this size, that the adaptive algorithm is many times more competitive than the non-adaptive algorithm and due to filtering recursion being 
unavailable is an ideal algorithm for big data changepoint problems.

\section{Conclusion \& Discussion}
This paper introduces an adaptive changepoint sampling algorithm for multiple changepoint problems. 
We have described how our algorithm is be designed to learn \textit{on-the-fly} where changepoints 
are likely to be located in a dataset. We prove that the adaptive MCMC scheme which we develop leaves the posterior distribution ergodic. Moreover the adaptive 
MCMC algorithm scales to large datasets in contrast to the filtering recursions of \citet{raey} which is unreliable and prone to numerical instability in 
this case. Three datasets increasing in size from \SI{4000} observations to over \SI{260000} observations have been illustrated in this paper. The latter and 
largest dataset is unable to be analysed using filtering recursions and we show that our algorithm works well here to detect the number and location of 
changepoints in a reasonable computational time. We recommend using the filtering recursions for smaller datasets (e.g.\ up to size \SI{100000} observations) and where computational time is not an issue. 
However for datasets with more than \SI{100000} observations we advocate using the adaptive 
changepoint sampler. 

Further work will involve extending this adaptive MCMC approach to other posterior distributions on discrete 
state spaces. For example, the likelihood of the data in this paper assumes independent observations within a segment between two changepoints. This could be replaced with a dependence within segment 
likelihood as in the work of \citet{wyse2011approximate} where the marginal segment likelihood is replaced with integrated nested Laplace approximations.


The diminishing adaptation condition we have proved in this paper is just one method 
of automatically tuning adaptive proposals. Our adaptation condition takes the form of a stochastic approximation algorithm but more involved adaptation schemes may be designed using the theory developed in this paper
and this is a focus of future work. To conclude, we feel that there is much wider scope for the implementation of adaptive MCMC in practice and we hope that this article will encourage more work in this direction. 


\section*{Acknowledgements}\label{sec:Acknowledgements}
The Insight Centre for Data Analytics is supported by Science Foundation Ireland under Grant Number SFI/12/RC/2289.
Alan Benson and Nial Friel's research was also supported by a Science Foundation Ireland grant: 12/IP/1424.

\clearpage
\bibliography{adaptive_changepoint_alanbenson_nialfriel.bib}

\begin{thebibliography}{}

\bibitem[\protect\citeauthoryear{Barry and Hartigan}{Barry and
  Hartigan}{1992}]{barry1992}
Barry, D. and J.~A. Hartigan (1992, 03).
\newblock Product partition models for change point problems.
\newblock {\em Ann. Statist.\/}~{\em 20\/}(1), 260--279.

\bibitem[\protect\citeauthoryear{Carpenter, Clifford, and Fearnhead}{Carpenter
  et~al.}{1999}]{carpenter1999improved}
Carpenter, J., P.~Clifford, and P.~Fearnhead (1999).
\newblock Improved particle filter for nonlinear problems.
\newblock {\em IEE Proceedings-Radar, Sonar and Navigation\/}~{\em 146\/}(1),
  2--7.

\bibitem[\protect\citeauthoryear{Chen and Gupta}{Chen and
  Gupta}{2011}]{chen2011parametric}
Chen, J. and A.~K. Gupta (2011).
\newblock {\em Parametric statistical change point analysis: with applications
  to genetics, medicine, and finance}.
\newblock Springer Science \& Business Media.

\bibitem[\protect\citeauthoryear{Chen, Takita, Choi, Kato, Ohira, Sanada, Wang,
  Soda, Kikuchi, Igarashi, et~al.}{Chen et~al.}{2008}]{chen2008oncogenic}
Chen, Y., J.~Takita, Y.~L. Choi, M.~Kato, M.~Ohira, M.~Sanada, L.~Wang,
  M.~Soda, A.~Kikuchi, T.~Igarashi, et~al. (2008).
\newblock Oncogenic mutations of alk kinase in neuroblastoma.
\newblock {\em Nature\/}~{\em 455\/}(7215), 971--974.

\bibitem[\protect\citeauthoryear{Chib}{Chib}{1998}]{chib1998estimation}
Chib, S. (1998).
\newblock Estimation and comparison of multiple change-point models.
\newblock {\em Journal of econometrics\/}~{\em 86\/}(2), 221--241.

\bibitem[\protect\citeauthoryear{Fearnhead}{Fearnhead}{2006}]{raey}
Fearnhead, P. (2006).
\newblock Exact and efficient bayesian inference for multiple changepoint
  problems.
\newblock {\em Statistics and Computing\/}~{\em 16\/}(2), 203--213.

\bibitem[\protect\citeauthoryear{Green}{Green}{1995}]{green1995reversible}
Green, P.~J. (1995).
\newblock Reversible jump markov chain monte carlo computation and bayesian
  model determination.
\newblock {\em Biometrika\/}~{\em 82\/}(4), 711--732.

\bibitem[\protect\citeauthoryear{Griffin, Latuszynski, and Steel}{Griffin
  et~al.}{2014}]{griffin2014individual}
Griffin, J., K.~Latuszynski, and M.~Steel (2014).
\newblock Individual adaptation: an adaptive mcmc scheme for variable selection
  problems.
\newblock {\em arXiv preprint arXiv:1412.6760v2\/}.

\bibitem[\protect\citeauthoryear{Haario, Saksman, and Tamminen}{Haario
  et~al.}{2001}]{haario2001}
Haario, H., E.~Saksman, and J.~Tamminen (2001, 04).
\newblock An adaptive metropolis algorithm.
\newblock {\em Bernoulli\/}~{\em 7\/}(2), 223--242.

\bibitem[\protect\citeauthoryear{Hocking, Boeva, Rigaill, Schleiermacher,
  Janoueix-Lerosey, Delattre, Richer, Bourdeaut, Suguro, Seto, et~al.}{Hocking
  et~al.}{2014}]{hocking2014seganndb}
Hocking, T.~D., V.~Boeva, G.~Rigaill, G.~Schleiermacher, I.~Janoueix-Lerosey,
  O.~Delattre, W.~Richer, F.~Bourdeaut, M.~Suguro, M.~Seto, et~al. (2014).
\newblock Seganndb: interactive web-based genomic segmentation.
\newblock {\em Bioinformatics\/}~{\em 30\/}(11), 1539--1546.

\bibitem[\protect\citeauthoryear{{\L}atuszy{\'n}ski and
  Rosenthal}{{\L}atuszy{\'n}ski and
  Rosenthal}{2014}]{latuszynski2014containment}
{\L}atuszy{\'n}ski, K. and J.~S. Rosenthal (2014).
\newblock The containment condition and adapfail algorithms.
\newblock {\em Journal of Applied Probability\/}~{\em 51\/}(04), 1189--1195.

\bibitem[\protect\citeauthoryear{Lavielle and Lebarbier}{Lavielle and
  Lebarbier}{2001}]{lavielle2001application}
Lavielle, M. and E.~Lebarbier (2001).
\newblock An application of mcmc methods for the multiple change-points
  problem.
\newblock {\em Signal Processing\/}~{\em 81\/}(1), 39--53.

\bibitem[\protect\citeauthoryear{Mahendran, Wang, Hamze, and Freitas}{Mahendran
  et~al.}{2012}]{mahendran2012adaptive}
Mahendran, N., Z.~Wang, F.~Hamze, and N.~D. Freitas (2012).
\newblock Adaptive mcmc with bayesian optimization.
\newblock In {\em International Conference on Artificial Intelligence and
  Statistics}, pp.\  751--760.

\bibitem[\protect\citeauthoryear{Matias, Vitter, and Ni}{Matias
  et~al.}{1993}]{matias1993dynamic}
Matias, Y., J.~S. Vitter, and W.~Ni (1993).
\newblock Dynamic generation of discrete random variates.
\newblock In {\em SODA}, pp.\  361--370.

\bibitem[\protect\citeauthoryear{Meyn and Tweedie}{Meyn and
  Tweedie}{2012}]{meyn2012markov}
Meyn, S.~P. and R.~L. Tweedie (2012).
\newblock {\em Markov chains and stochastic stability}.
\newblock Springer Science \& Business Media.

\bibitem[\protect\citeauthoryear{Raftery and Akman}{Raftery and
  Akman}{1986}]{raftery1986}
Raftery, A.~E. and V.~E. Akman (1986).
\newblock Bayesian analysis of a poisson process with a change-point.
\newblock {\em Biometrika\/}~{\em 73\/}(1), 85--89.

\bibitem[\protect\citeauthoryear{Rigaill}{Rigaill}{2015}]{rigaill2015pruned}
Rigaill, G. (2015).
\newblock A pruned dynamic programming algorithm to recover the best
  segmentations with 1 to k\_max change-points.
\newblock {\em Journal de la Soci{\'e}t{\'e} Fran{\c{c}}aise de
  Statistique\/}~{\em 156\/}(4), 180--205.

\bibitem[\protect\citeauthoryear{Rosenthal and Roberts}{Rosenthal and
  Roberts}{2007}]{rosenthal2007coupling}
Rosenthal, J.~S. and G.~O. Roberts (2007).
\newblock Coupling and ergodicity of adaptive mcmc.
\newblock {\em Journal of Applied Probablity\/}~{\em 44}, 458--475.

\bibitem[\protect\citeauthoryear{Ruanaidh and Fitzgerald}{Ruanaidh and
  Fitzgerald}{2012}]{ruanaidh2012numerical}
Ruanaidh, J. and W.~J. Fitzgerald (2012).
\newblock {\em Numerical Bayesian methods applied to signal processing}.
\newblock Springer Science \& Business Media.

\bibitem[\protect\citeauthoryear{Stephens}{Stephens}{1994}]{stephens1994}
Stephens, D.~A. (1994).
\newblock Bayesian retrospective multiple-changepoint identification.
\newblock {\em Journal of the Royal Statistical Society. Series C (Applied
  Statistics)\/}~{\em 43\/}(1), 159--178.

\bibitem[\protect\citeauthoryear{Vose}{Vose}{1999}]{vose1999simple}
Vose, M.~D. (1999).
\newblock {\em The simple genetic algorithm: foundations and theory},
  Volume~12.
\newblock MIT press.

\bibitem[\protect\citeauthoryear{Walker}{Walker}{1974}]{walker1974new}
Walker, A.~J. (1974).
\newblock New fast method for generating discrete random numbers with arbitrary
  frequency distributions.
\newblock {\em Electronics Letters\/}~{\em 10\/}(8), 127--128.

\bibitem[\protect\citeauthoryear{Wyse and Friel}{Wyse and
  Friel}{2010}]{wyse2010simulation}
Wyse, J. and N.~Friel (2010).
\newblock Simulation-based bayesian analysis for multiple changepoints.
\newblock {\em arXiv preprint arXiv:1011.2932\/}.

\bibitem[\protect\citeauthoryear{Wyse, Friel, et~al.}{Wyse
  et~al.}{2011}]{wyse2011approximate}
Wyse, J., N.~Friel, et~al. (2011).
\newblock Approximate simulation-free bayesian inference for multiple
  changepoint models with dependence within segments.
\newblock {\em Bayesian Analysis\/}~{\em 6\/}(4), 501--528.

\bibitem[\protect\citeauthoryear{Yellott}{Yellott}{1977}]{yellott1977}
Yellott, J.~I. (1977).
\newblock The relationship between luce's choice axiom, thurstone's theory of
  comparative judgment, and the double exponential distribution.
\newblock {\em Journal of Mathematical Psychology\/}~{\em 15\/}(2), 109--144.

\end{thebibliography}
\clearpage
\appendix
\titlelabel{\thetitle.\quad}
\section*{Appendices}
\section{Normal Marginal Likelihood Calculation}
\label{app:normallike}
The marginal likelihood for a changepoint in the mean parameter for normally distributed data with known variance ($\sigma^2$) and with a $\mathcal{N}(\mu, \tau^2\sigma^2)$ prior on $\mu$ can be expressed with $k = b-a+1$ as the integral of the product of two normal densities
\begin{equation}
\mathrm{P}(a,b) = \int_0^{\infty}  
\frac{(2\pi \sigma^2)^{-(k+1)/2}}{\tau} 
\prod_{i=a}^{b} 
\exp\left( 
-\frac{1}{2\sigma^2} \left[
\left(k+\frac{1}{\tau^2}\right)\mu^2 -2 \left( s_1 + \frac{m}{\tau^2} \right)\mu + \left(s_2 + \frac{\mu^2}{\tau^2}\right)
\right]
\right)
\, d \mu
\end{equation}
where $s_1 = \sum_{i=a}^{b} y_i$ and $s_2 = \sum_{i=a}^{b} y_i^2$. Completing the square and rearranging
\begin{equation}
= 
(2\pi\sigma^2)^{-k/2}
(k\tau^2 + 1)^{-1/2} 
\exp 
\left( 
-\frac{1}{2\sigma^2} 
\left[
\left( 
s_2 + \frac{\mu^2}{\tau^2} 
\right)
- \frac{\tau^2}{k\tau^2 + 1}\left( s_1 + \frac{\mu}{\tau^2} \right)^2
\right]
\right) 
\label{unsafe}
\end{equation}
completing the square again with the term inside the square brackets gives
\begin{equation}
 = 
(2\pi\sigma^2)^{-k/2}
(k\tau^2 + 1)^{-1/2} 
\exp 
\left( 
-\frac{1}{2\sigma^2} 
\left[
\left(s_2 - \frac{s_1^2}{k}\right) + \frac{k}{k\tau^2 + 1}\left(m - \frac{s_1}{k} \right)^2
\right]
\right)
\end{equation}
This is a more numerically stable version than \eqref{unsafe} as $s_1 - \frac{s_1^2}{k}$ is the sum of squared deviations from the segment sample mean which can be calculated recursively and $m - \frac{s1}{k}$ is the distance of the segment sample mean from the prior which will cause no numerical issues.

\section{Walker's Alias Method with Vose's Correction}
\label{app:alias}
The Alias Method is due to \citet{walker1974new} and the numerical safe approach to constructing Alias tables, which are needed for this method, is due to \citet{vose1999simple}. The algorithm is a very simple approach to simulating form a general categorical distribution with $k$ categories each having a (possibly unnormalised) weight $w_k$.

The weights are first normalised and then two tables are constructed, a probability table and an Alias table. Some of the normalised weights will be greater than the average probability $\frac{1}{k}$ and are known as Big Points, and some will be less than or equal to it, the Small Points. The method works by moving some of the probability mass from the Big Points to the Small Points. All Small Points will eventually be associated with at most one of the Big Points (its alias).

Once the Alias table has been constructed they can be sampled from in $\mathcal{O}(1)$ time. Simply select a Small Point uniformly at random and then use a biased coin flip to choose either that point or its Alias point. This method is extremely efficient and is currently the best of all methods for sampling from finite categorical distributions however if $w_k$ changes for any $k$ the entire tables must be reconstructed in $\mathcal{O}(n)$ more steps. Another method with a similar computational efficiency is the Gumbel Max Method \citep{yellott1977}

\section{Carpenter's Algorithm}
\label{app:carpenter}
Carpenter's Algorithm \citep{carpenter1999improved} is a method of sampling from a discrete probability distribution similar to the Alias method but without the need for precomputed probability tables. It works by exploiting the fact that the spacing in the uniform distribution on [0,1) is exponential with rate $1$.
To sample $n$ values from $x = \{1, \dots M\}$ with $P(X = i) = p_i$
\begin{enumerate}
\item Simulate $e_1, \dots, e_{n+1} \sim \exp{\lambda = 1}$
\item Create the step function (CDF) $u_j = \frac{\sum_{i=1}^{j} e_i}{\sum_{i=1}^{n+1} e_i}$ for $j=1, \dots n+1$
\item Set $Q = 0$, $U = u_1$, $j=1$, $i=1$
\item If $U < Q + P(X = j)$ output $j$ and set $U = u_{i+1}$ and $i =i +1$. Otherwise set $Q = P(X = j)$ and $j = j+1$. Repeat until $i = n+1$.
\end{enumerate}

\section{Dual Adaptation}
\label{app:dual}
Only one of the adaptive parameters for a point $i$ ($\bm{a}_i$ / $\bm{d}_i$) are updated when either an add or delete move at this point has been accepted. \citet{griffin2014individual} has suggested that information can still be gained for both $\bm{a}_i$ and $\bm{d}_i$ regardless of which move has been performed.

Dual adaptation involves using the M-H ratio calculated for the current move, denoted $\alpha_F(\zvec, \zvec^{\prime})$ for the \textit{forward} move, and its \textit{reverse} move, denoted $\alpha_R(\zvec^{\prime}, \zvec)$. Calculation of $\alpha_R$ is trivial once $\alpha_F$ is available.
\citet{griffin2014individual} shows how to modify the adaptation scheme so that it continues to target $M$.
\noindent
The average \textit{a posteriori} mutation rate of the algorithm is 
\begin{equation*}
M = \int C(\zvec, \zvec^{\prime})\alpha(\zvec, \zvec^{\prime}) q(\zvec, \zvec^{\prime})\pi(\zvec \vert y) \, d \zvec \, d \zvec^{\prime} \hspace{2em} 
\end{equation*}
where  $q(\zvec, \zvec^{\prime})$ depends on the move (add / delete)
 and  $C(\zvec, \zvec^{\prime}) = 0 $  if  $z_i = z^{\prime}_i \hspace{0.5em} \forall i$.

If we wish to continue targeting this mutation rate under Dual adaptation we need to define a second chain to preserve detailed balance.
\begin{equation*}
(\bm{\delta}, \bm{\delta}^\prime) = 
\begin{cases}
(\zvec^{\prime}, \zvec), & \text{with probability } \alpha (\zvec, \zvec^{\prime}), \\[1.5ex]
(\zvec, \zvec^{\prime}), &\text{with probability } 1 - \alpha (\zvec, \zvec^{\prime}).
\end{cases}
\end{equation*}
Now
\begin{equation*}
\begin{split}
M_{\deltavec} =&\int C(\deltavec, \deltavecpr)\alpha(\deltavec, \deltavecpr) q(\deltavec, \deltavecpr)\pi(\deltavec \vert y) \, d \deltavec \, d \deltavecpr \\[1.5ex]
=& \hspace{0.25em} \mathbf{E} \lbrack C(\deltavec, \deltavecpr)\alpha(\deltavec, \deltavecpr) \rbrack \\[1.5ex]
=& \hspace{0.25em} \alpha (\zvec, \zvec^{\prime})\mathbf{E}(C(\zvec^{\prime}, \zvec)\alpha(\zvec^{\prime}, \zvec)) + (1-\alpha (\zvec, \zvec^{\prime}))\mathbf{E}(C(\zvec, \zvec^{\prime})\alpha(\zvec, \zvec^{\prime}))
\end{split}
\end{equation*}
and weighting this with the original mutation rate we get
\begin{equation*}
w\alpha (\zvec, \zvec^{\prime})\mathbf{E}(C(\zvec^{\prime}, \zvec)\alpha(\zvec^{\prime}, \zvec)) + (1-w\alpha (\zvec, \zvec^{\prime}))\mathbf{E}(C(\zvec, \zvec^{\prime})\alpha(\zvec, \zvec^{\prime}))
\end{equation*}
note $ C(\zvec^{\prime}, \zvec) = C(\zvec, \zvec^{\prime})$

The new adaptive scheme becomes

\bigskip
If an add move has just been accepted:
\begin{align*}
&\text{Update $\bm{a}_i$}& \hspace{1em} \log(\bm{a}_{i}^{(t+1)})
&= \log(\bm{a}_{i}^{(t)}) + \left(\frac{h}{t/n}\right)
\left( \alpha(\zvec, \zvec^{\prime}) - \alpha_\text{target} \right)\left( 1- w\alpha(\zvec, \zvec^{\prime}) \right). \\[1.5ex]
&\text{Update $\bm{d}_i$}& \hspace{1em} 
\log(\bm{d}_{i}^{(t+1)})
&= \log(\bm{d}_{i}^{(t)}) + \left(\frac{h}{t/n}\right)
\left( \alpha(\zvec^{\prime}, \zvec) - \alpha_\text{target} \right)\alpha(\zvec, \zvec^{\prime}).
\end{align*}

\bigskip
If a delete move has just been accepted:
\begin{align*}
&\text{Update $\bm{a}_i$}& \hspace{1em} \log(\bm{a}_{i}^{(t+1)})
&= \log(\bm{a}_{i}^{(t)}) + \left(\frac{h}{t/n}\right)
\left( \alpha(\zvec^{\prime}, \zvec) - \alpha_\text{target} \right)\alpha(\zvec, \zvec^{\prime}). \\[1.5ex]
&\text{Update $\bm{d}_i$}& \hspace{1em} \log(\bm{d}_{i}^{(t+1)})
&= \log(\bm{d}_{i}^{(t)}) + \left(\frac{h}{t/n}\right)
\left( \alpha(\zvec, \zvec^{\prime}) - \alpha_\text{target} \right)\left( 1- w\alpha(\zvec, \zvec^{\prime}) \right).
\end{align*}
The choice of $w$ is recommended as 0.5 by \citet{griffin2014individual}.

\end{document}